\def\widebar{\accentset{{\cc@style\underline{\mskip10mu}}}}
\def\Widebar{\accentset{{\cc@style\underline{\mskip13mu}}}}
\newtheorem{theorem}{Theorem}
\newtheorem{definition}{Definition}
\newtheorem{proposition}{Proposition}
\begin{document}

\captionsetup[figure]{labelfont={ }, name={Fig.}, labelsep=period} 
\pagestyle{empty}

\title{Energy Harvesting Powered Sensing in IoT: Timeliness Versus Distortion}
\author{Yunquan~Dong,~\IEEEmembership{Member,~IEEE}, 
                 Pingyi Fan~\IEEEmembership{Senior Member,~IEEE},
                 and~Khaled~Ben~Letaief,~\IEEEmembership{Fellow,~IEEE}
\vspace{-8mm}

        \thanks{Y. Dong is with the School of Electronic and Information Engineering,  Nanjing University of Information Science and Technology, Nanjing 210044, China (e-mail: yunquandong@nuist.edu.cn).

        P. Fan is with the Department of Electronic Engineering, Tsinghua University, Beijing 100084, China (e-mail:  fpy@tsinghua.edu.cn).

        Khaled B. Letaief is with the Department of Electrical and Computer Engineering, HKUST, Clear Water Bay, Kowloon, Hong Kong (e-mail: eekhaled@ust.hk).

        This work was supported by the National Natural Science Foundation of China (NSFC) under Grant 61701247.
        }
        }


\maketitle
\thispagestyle{empty}

\begin{abstract}
We consider an Internet-of-Things (IoT) system in which an energy harvesting powered sensor node monitors the phenomenon of interest and transmits its observations to a remote monitor over a Gaussian channel.
    We measure the timeliness of the signals recovered by the monitor using age of information (AoI), which could be reduced by transmitting more observations to the monitor.
We evaluate the corresponding distortion with the mean-squared error (MSE) metric, which would be reduced if a larger transmit power and a larger source coding rate were used.
    Since the energy harvested by the sensor node is random and limited, however, the timeliness and the distortion of the received signals cannot be optimized at the same time.
Thus, we shall investigate the timeliness-distortion trade-off of the system by minimizing the average weighted-sum AoI and distortion over all possible transmit powers and transmission intervals.
    First, we explicitly present the optimal transmit powers for the performance limit achieving save-and-transmit policy and the easy-implementing fixed power transmission policy.
Second, we propose a backward water-filling based offline power allocation algorithm and a genetic based offline algorithm to jointly optimize the transmission interval and transmit power.
    Third, we formulate the online power control as an Markov Decision Process (MDP) and solve the problem with an iterative algorithm, which closely approach the trade-off limit of the system.
Also, we show that the optimal transmit power is a monotonic and bi-valued function of current AoI and distortion.
    Finally,  we present our results via numerical simulations and extend results on the save-and-transmit policy to fading sensing systems.


\end{abstract}

\begin{keywords}
Age of information, Internet of Things, sensing systems, energy harvesting.
\end{keywords}

\section{Introduction}
\IEEEPARstart {W}{ith} the rapid development in embedded systems, multi-terminal communications, and cloud computing, more and more smart devices and low-power sensors are connected to the internet, which is referred to as the Internet-of-Things (IoT).
    In particular, IoT networks have been increasingly popular in scenarios related to data gathering and service sharing in recent years, e.g.,  environment monitoring and smart city planning~\cite{Monitoring-IoTJ-2018}, industrial automation \cite{industrial-auto}, and target surveillance and tracking \cite{uav-tracking}.
 In these systems, a number of sensor nodes are used to monitor the phenomenon of interest constantly and to report the obtained observations to a remote center in real-time.
    Different from traditional communication systems in which data rate (or throughput) is the most important metric, the distortion and the timeliness of the recovered signals are more concerned in IoT based monitoring systems.
 That is, we are more interested in whether the signal recovered by the monitor can precisely characterize the phenomenon and whether the signal is timely or outdated.
     When the system supports a higher data rate, the monitor only sees a reduction in distortion or an improvement in the timeliness of information transmission.

In IoT networks and sensor neworks, the distortion of the recovered signals are often measured by the mean-squared error (MSE) of the decoded signal or the estimated signal. 
    In \cite{dong-TWC-correlated_sensing}, for example, the weighted-sum distortion in recovering two correlated Gaussian sources was optimized in the framework of network information theory.
In \cite{Dong-TSP-2019, Xiao-Linear-2008}, the random source is estimated by combining sensor observations with a best linear unbiased estimator (BLUE), for sensor networks with orthogonal channels and coherent multiple access channels (MAC), respectively.

 The timeliness of the recovered signals can be measured by the age of information (AoI) metric, which is defined as the elapsed time (i.e., the age) after the generation of the latest received observation \cite{Vnet-1-2011}.
    In particular, minimizing the AoI of the system can guarantee the timeliness of sensing while increasing the throughput or decreasing the transmission delay cannot.
For example, the recovered signals would be outdated if the throughput is so large that the delay waiting for being transmitted is large or the throughput is so small that few new recoveries are available at the monitor (no matter the delay is small or not).
    Therefore, AoI has been widely studied in various real-time applications, e.g., in sensor-based monitoring systems \cite{ Gu-2019-mornitoring, Niu-2019-RR1}, health monitoring systems \cite{health-Proc-2012}, cognitive radio-based IoT systems \cite{Gu-2019-cognitive}, and two-way data exchanging systems \cite{Dong-2018-infocom, Dong-2019-access, Dong-2019-jcn}.

 On the other hand, the limitation in the energy supply of sensor nodes puts formidable challenges to IoT networks.
    To be specific, the battery capacity of  sensor nodes is generally small due to device size constraints and cost considerations, which significantly limits the life-time of sensors.
 In view of this, the energy harvesting technology was developed \cite{ulukus-JSAC-eh}.
    By utilizing an energy harvesting unit and an energy buffer, sensor nodes can harvest energy (e.g., solar
and wind power) from the ambient environment, thereby ensuring an unlimited energy supply for each node.
    However, the arrivals of energy are sporadic and irregular.
To better utilize the harvested energy, therefore, we need to schedule the usage of energy carefully.
    First, if the harvesting process is fully predictable (i.e., known non-causally at transmitter), the harvested energy can
be scheduled in an offline manner \cite{Uysal-2015-ita, Ulukus-TWC-2019}.
    In this scenario, the scheduling of energy usage turns to be deterministic and can be solved before the transmission process actually begins.
Second, if the energy harvesting process cannot be well predicted, the online energy scheduling is required, in which each node adjusts its transmit power based on the previous and current energy states in real-time \cite{ Uysal-2015-ita, Ulukus-TWC-2019,Uysal-2017-ISIT, Yangjing-2017-TGCN,Zhou-TVT-2016}.
    In particular, online energy schedulings can be readily solved through the Markov decision process (MDP) based stochastic control, the Lyapunov optimization technique, or the semi-definite relaxation framework.

In energy harvesting powered IoT networks, however, the distortion and the timeliness (i.e., the average AoI) of the recovered signals can not be optimized at the same time.
    On one hand, sensor nodes should transmit more observations to reduce the average AoI, which inevitably reduces the corresponding transmit power, since the available energy is limited.
As a results, the source coding rate of the observations has to be reduced, and thus the distortion of the recovered signals would be increased.
    On the other hand, a smaller distortion can be achieved if the sensor uses longer time to accumulate energy so that a large transmit power can be used.
Along with the reduction in distortion, however, the average AoI would definitely be increased.

In this paper, therefore, we shall approach the best timeliness-distortion trade-off by minimizing the average weighted-sum AoI and distortion of the system.
    Specifically, we consider a monitoring system with an energy harvesting powered sensor node and a remote monitor.
The sensor observes the phenomenon and transmits its observations to the monitor over a Gaussian channel when it has saved the required energy.
    We first investigate the timeliness-distortion limits of the system through a save-and-transmit policy, in which the sensor saves all the harvested energy in the energy buffer for a long time and then transmits observations with a fixed transmit power and a fixed transmission interval.
We then investigate the performance of the fixed power transmission policy, the offline and the online power control for the system.
    The obtained results on the save-and-transmit policy and a fixed power transmission policy will also be extended to sensing systems with block Rayleigh fading.
    The main contribution of the paper can be summarized as follows.

\begin{itemize}
    \item We present the timeliness-distortion limit of the system by studying the performance of the save-and-transmit policy.
                    We also show that the fixed power transmission policy is a simple yet well behaved scheme.
    \item We propose a backward water-filling based offline power control scheme for the system with a given sequence of transmission intervals.
            With this power control scheme, we then propose a genetic based algorithm to jointly optimize the transmission intervals and the corresponding transmit powers.
    \item We model the online power control of the system as an MDP and solve the problem by an iterative algorithm.
        We show that the optimal transmit power is non-decreasing with the energy state of the sensor and is a bi-valued function of the current AoI and distortion.
\end{itemize}

\subsection{Organizations}

This paper is organized as follows.
   Section~\ref{sec:2_model} presents the sensing model, energy harvesting model, the definition of AoI, and the formulation of our problem.
 We investigate the timeliness-distortion performance of the fixed power transmission and the save-and-transmit policy in Section~\ref{sec:3_fx_sv}.
    In Section \ref{sec:4_offline}, we study the oflline power control of a system with a finite period of observations and transmissions.
In Section \ref{sec:5_online}, we discuss the online power control through an MDP formulation.
    In particular, we present the monotonicity of the cost function and the optimal transmit power, with respect to the AoI, the distortion, and the energy state, respectively.
The theoretic results are confirmed via numerical simulations Section~\ref{sec:6_simulation}.
    Finally,  we conclude the paper in Section~\ref{sec:5_conclusion}.


%

\begin{figure}[!t]
\centering
\includegraphics[width=3.0in]{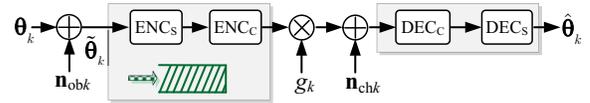}
\caption{The sensing system. $\text{ENC}_\text{S}$ denotes the source encoder,  $\text{ENC}_\text{C}$ is the channel encoder, $\text{DEC}_\text{S}$ is the source decoder, and $\text{DEC}_\text{C}$ is the channel decoder.} \label{fig:system_model}
\end{figure}

\section{System Model}\label{sec:2_model}
Consider a sensing system with a remote monitor and an energy harvesting powered sensor, which observes a certain phenomenon (source signal $\theta$) characterized by a Gaussian process from time to time.
    The observations will be encoded with lossy source coding and be transmitted to the monitor over a Gaussian channel, as shown in Fig. \ref{fig:system_model}.
Finally, the monitor will decode the received signal and restore the source signal under some distortion constraints.

\subsection{Sensing Model}
We assume that time is discrete and the slot length is $T_{\text{s}}$.
    Each block $T_{\text{B}}$ consists of $2J$ slots and can be either an \textit{idle block} or a \textit{busy block}.
That is, the sensor can choose to stay idle or to make $J$ observations $\{\widetilde{{\theta}}_{kj}, j=1,2,\cdots,J\}$ in the first $J$ slots and transmit them to the remote monitor in the remaining slots, where the blocks are indexed by $k$ and the slots are indexed by $j$.
    Due to accuracy issues, the observations suffer from independently and identically distributed (i.i.d.) Gaussian observation noises ${ n}_{\text{ob}kj}$ with zero-mean and variance $\sigma^2_{\text{ob}}$, i.e., $\widetilde{{\theta}}_{kj}=\theta_{kj}+{n}_{\text{ob}kj}$.

In the $k$-th block, we denote the transmit power of the sensor as $P_k$ and the power gain of the channel as $g_k$.
    We also denote the channel noise power as $\sigma^2_{\text{ch}}$, and the frequency bandwidth of transmitted signals as $W$.
    Moreover, we consider a set of following assumptions.

\begin{itemize}
  \item [A1] $\{\theta_{kj}, k= 1,2,\cdots, j=1,2,\cdots,J\}$ is a stationary Gaussian process with zero-mean and variance $\sigma^2_\theta$.
  \item [A2] $P_k$ is the normalized transmit power taking values from non-negative integers, i.e., $P_k=0,1,2,\cdots$.
  \item [A3] $g_k=1$ for all $k\geq1$, i.e., the link between the sensor and the monitor is a channel with additive white Gaussian noise (AWGN).
  \item [A4] $T_{\text{s}}=\frac1{2 W}$, i.e., $T_{\text{s}}$ equals to the maximum sampling time.
\end{itemize}

 We note that the results obtained based on Assumption A1 can also be generalized to systems with non-stationary sources  \cite{Reily-ComST-2014} or quasi-stationary sources \cite{Joda-TCom-2013}, as long as the period before the next change in distribution is sufficiently long for the required source coding and channel coding (e.g., we have $10^5$ channel uses per second with slot length $T_{\text{s}}=10~\mu$s).
    In Assumption A2, each $P_k$ is obtained from $P_k=P'_k/\widebar P$, in which $P'_k$ is the actual transmit power and $\widebar P$ is a normalizing factor.
In particular, we incorporate $\widebar P$ into the channel noise. That is, for each given noise power $\sigma^2_0$, we shall rewrite the channel noise power as $\sigma^2_{\text{ch}}=\widebar P\sigma^2_0$.
    Although $P_k$ is an integer,  the channel SNR can be treated as continuous since $\widebar P$ can be any small positive numbers as desired.
On Assumption A3, it is noted that the effect of channel attenuation can be considered by varying the channel noise power.
    Given the noise power $\sigma^2_0$ and normalizing factor $\widebar P$,  for example, we can set $\sigma^2_{\text{ch}}= \widebar P\sigma^2_0 \iota^a$ and change $\sigma^2_{\text{ch}}$ instead of varying $\iota$, in which $\iota$ is the sensor-monitor distance and $a$ is the pathloss exponent.
   By considering some additional multiplicative channel gains, this model can also be extended to systems with block fading channel (cf. Subsection \ref{subsec:6_B}).

 Note that the instantaneous capacity of the sensor-monitor channel is given by~\cite[Chap.~9.1, \textit{Theorem} 9.1.1]{Cover-IT-Book}
\begin{equation}\label{eq:ch_rate}
    r_{\text{ch}k} = W \log\left( 1+\frac{P_k}{\sigma^2_{\text{ch}}} \right).
\end{equation}

In each busy block, the obtained sequence of $J$ observations will be encoded into an index $m_k$ using lossy source coding \cite[Chap.~10.2, \textit{Definition} 10.7]{Cover-IT-Book}.
    Afterwards, $m_k$ will be encoded into an ideal channel codeword,  which will be transmitted to the monitor
    in the following $J$ slots.
Upon receiving a distorted channel codeword, the monitor will perform channel decoding and source decoding sequentially to obtain a sequence of restored signals $\{\widehat{\theta}_{kj}, j=1,2,\cdots,J\}$ with a certain quantization distortion $\sigma^2_{\text{qu}k}=\mathbb{E}[(\widetilde{\theta}_{kj}-\widehat{\theta}_{kj})^2]$.
    According to \cite[Chap.~9.1, \textit{Theorem} 9.1.1]{Cover-IT-Book}, the minimum source coding rate for each sample is given by
\begin{equation}\label{eq:qu_rate}
    r_{\text{sc}k}  = \frac12 \log \frac{\sigma^2_\theta+\sigma^2_{\text{ob}}}{\sigma^2_{\text{qu}k}}.
\end{equation}

By combining \eqref{eq:ch_rate}, \eqref{eq:qu_rate}, Assumption A3, and $r_{\text{ch}k}JT_{\text{s}} = r_{\text{sc}k}J$, we have
\begin{equation} \label{eq:sigma_qu_1}
    \sigma^2_{\text{qu}k} = \frac{\sigma^2_\theta+\sigma^2_{\text{ob}} }
                {1+\frac{P_k}{\sigma^2_{\text{ch}}}}.
\end{equation}

It is also shown in~\cite[Chap.~10.3, Fig. 10.5]{Cover-IT-Book} that the rate-distortion limit approaching source coding from observation $\widetilde{\theta}_{kj}$ to recovery $\widehat{\theta}_{kj}$ can be characterized by the following test channel
    \begin{equation} \label{eq:x_first}
        \widetilde{\theta}_{kj} = \widehat{\theta}_{kj} + n_{\text{qu}kj},
    \end{equation}
    where $n_{\text{qu}kj}$ is the i.i.d. Gaussian quantization noise with zero mean and variance $\sigma^2_{\text{qu}k}$ and $\widehat{\theta}_{kj}$ is randomly generated according to a certain optimal Gaussian distribution.
In particular, $n_{\text{qu}kj}$ and $\widehat{\theta}_{kj}$ are independent from each other.

    Since a noisy observation can also be expressed as $\widetilde{\theta}_{kj}= \theta_{kj}+n_{\text{ob}kj}$, the recovery $\widehat{\theta}_{kj}$ can further be written as
    \begin{equation}\label{eq:model}
      \widehat{\theta}_{kj} = \theta_{kj} - n,
    \end{equation}
    where $n = {n}_{\text{qu}kj} - {n}_{\text{ob}kj}$ is the total noise.

According to \cite[\textit{Proposition} 1]{Dong-TSP-2019} and \eqref{eq:sigma_qu_1}, the total noise power, i.e., the distortion of a busy block, is given by
\begin{align}\label{eq:theta_dist}
    D_k =\sigma^2_{\text{ob}} +  \frac{(\sigma^2_\theta-\sigma^2_{\text{ob}})\sigma^2_{\text{ch}}}
                                        {\sigma^2_{\text{ch}}+P_{k}}.
\end{align}

Moreover, the distortion does not change until another busy block is performed and completed.

\subsection{Energy Harvesting and Energy Usage} \label{subsec:pc_policy}
We assume that the sensor is powered by energy harvesting, e.g., by harvesting energy from the wind.
    For notational simplicity, we shall present energy by $\widebar PT_{\text{B}}$ Joule per unit.
In doing so, we can compare power and energy directly.

In each block, we assume that the sensor harvests $E_k=1$ unit of energy with probability $\lambda$.
    That is, we have $\Pr\{E_k=1\}=\lambda$ and  $\Pr\{E_k=0\}=1-\lambda$.
When a unit of energy is harvested, it will be saved in an energy buffer and can be used in future blocks.
    Without loss of generality, we assume that the energy buffer is large and the probability of energy overflow is negligible.

At the beginning of each block, the sensor determines its transmit power $P_k$ (can be zero, i.e., stays idle) according to some power control scheme.
    If $P_k>0$ is used and the remaining energy is no less than $P_k$, the sensor will make a sequence of $J$ observations, encode them and then transmit them to the monitor.
In particular, we consider the following power control schemes:
\begin{itemize}
  \item  \textit{fixed power transmission} in which $P_k=P_{\text{fx}}$ for all $k\geq1$;
  \item  \textit{save-and-transmit policy} in which $K$ goes to infinity so that we can save all the harvested energy for a long time and then transmit with a fixed (also optimal) power $P_{\text{sv}}$ and a fixed transmission interval $X_{\text{sv}}$;
  \item \textit{offline power control} in which the energy harvesting information is available at the sensor non-causally and $P_k$ can be optimally determined before the transmissions actually start;
  \item \textit{online scheduling} in which the energy harvesting information is causally available and $P_k$ is determined based on the current energy state, age of information, and distortion.
\end{itemize}

For each power control scheme, it should be noted that the transmit power suffers from the following causality constraint
\begin{equation}\label{eq:energy_cstr}
    \sum_{i=1}^k P_i \leq \sum_{i=1}^{k-1} E_i, ~~~\forall ~k\geq1.
\end{equation}

\subsection{Age of Information}
Age of information is a measure of information freshness, as defined as below.

\begin{figure}[!t]
\centering
\includegraphics[width=3.1in]{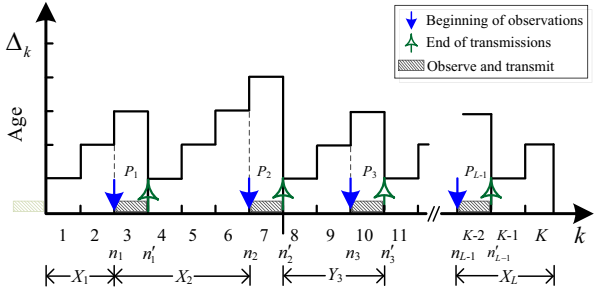}
\caption{Age of information, where $X_{l}=n_l-n_{l-1}$ is the inter-transmission time  and  $Y_{l}=n'_l-n'_{l-1}$ is the inter-departure time.  } \label{fig:aoi}
\end{figure}

\begin{definition}
     (\textit{Age of Information-AoI}~\cite{Yates-2012-age}). At the beginning of the $k$-th block, the index of the most recent busy block is
          \begin{equation} \nonumber
                N_{\text{U}}(k) = \max\{ l|n_l'< k \},
            \end{equation}
in which $n'_l$ is the end of the $l$-th busy block.
      The age of information of the system is then defined as the random process
    \begin{equation} \label{df:aoi}
        \Delta_k = k - N_{\text{U}}(k),
    \end{equation}
    which does not change during each block.

\end{definition}

As shown in Fig. \ref{fig:aoi}, we consider a period of $K$ blocks which consists of $L-1$ busy blocks.
    We denote the time between two transmission beginnings as inter-transmission time $X_l = n_l-n_{l-1}$ and the time between two consecutive transmission completions as inter-departure time $Y_l= n'_l-n'_{l-1}$.
We assume that there is a virtual busy block performed with transmit power $P_0=0$ in block $k=0$.
    Thus, the distortion of inter-departure time $Y_1$ would be $D_0=\sigma^2_\theta$.
Moreover, the energy harvested in $X_L$ will not be used and we denote $Y_L = K-n'_{L-1}$.
    Thus, we have $Y_1=X_1+1$, $Y_L=X_L-1$,  and $Y_l=X_l$ for $2\leq l\leq L-1$.
Furthermore, the AoI returns to $\Delta_l= 1$ in the first block of each inter-departure time $Y_l$ and would be increased by one in the following blocks, until another busy block is completed (e.g., blocks 3, 7, and 10).

\subsection{Problem Formulation}
In this paper, we shall minimize the average of the weighted-sum AoI and distortion by scheduling the energy usage of the sensor node.
Note that the AoI $\Delta_k$ is given by \eqref{df:aoi} and the distortion $D_k$ is given by \eqref{eq:theta_dist}.
    The optimization problem, therefore, can be expressed as
    \begin{align}
\label{prob:0_0}
 (\textbf{P}_0)~~
    \min\limits_{\{P_k\}} ~~~&\frac1K \sum_{k=1}^K (\Delta_k+wD_k) \\
 \label{prob:0_1}
        \text{subject~to}~ & \sum_{i=1}^k P_i \leq \sum_{i=1}^{k-1} E_i,~~~\forall~ k\geq1, \\
            & P_k\geq 0,\qquad\qquad~~~\forall~ k\geq1, \qquad\quad\quad
\end{align}
where $w>0$ is a positive weighting coefficient.

In this paper, \textit{Problem} \textbf{P}$_0$ will be investigated under the fixed power transmission, offline scheduling, online scheduling, and the save-and-transmit policy, respectively.

\section{Long-Term Age-Distortion Trade-off} \label{sec:3_fx_sv}
In this section, the considered period is assumed to be infinitely long (i.e, $K\rightarrow\infty$), and we investigate the age-distortion limit of the sensing system as expressed in \textit{Problem} \textbf{P}$_0$ under the fixed power transmission policy and the save-and-transmit policy.

\subsection{Fixed Power Transmission} \label{subsec:3_fix_p}
In the fixed power transmission, the sensor would stay idle for a few blocks, until it has accumulated enough energy to perform a busy block at a given and fixed transmit power $P_{\text{fx}}\geq1$.
        Thus, the distortion $D_k$ would be constant throughout the period, i.e., $D_k=D_{\text{fx}}$.
    We denote the number of blocks for the sensor to accumulate the required energy for the $l$-th busy block as $\tau_{\text{H}l}$.
As shown in Fig. \ref{fig:aoi}, we have $\tau_{\text{H}l}=X_l$.
    Note that when $P_{\text{fx}}$ is large, $\tau_{\text{H}l}$  would be large,  which leads to  large  AoIs.
Note also that when $P_{\text{fx}}$ is large, the distortion $D_{\text{fx}}$ is small.
    Therefore, there is a natural trade-off between the average AoI and the distortion.
With the fixed power transmission policy, \textit{Problem} \textbf{P}$_0$ can be characterized by the following proposition.
\begin{proposition} \label{prop:fixed}
    With the fixed power transmission, \textit{Problem} \textbf{P}$_0$ reduces to
    \begin{align}
\label{prob:1_0}
  (\textbf{P}_1)~~~
    \min\limits_{P_{\text{fx}}}~~~~&\frac{P_{\text{fx}}+1}{2\lambda} + w\sigma^2_{\text{ob}} +  \frac{w(\sigma^2_\theta-\sigma^2_{\text{ob}})\sigma^2_{\text{ch}}}
                                        {\sigma^2_{\text{ch}}+P_{\text{fx}}} \quad  \\
 \label{prob:1_1}
        \text{subject~to}~~ & P_{\text{fx}}\geq 1,
\end{align}
    for which the optimal transmit power is $P_{\text{fx}}=1$
    if $w\leq w_0$ or $\sigma^2_{\text{ob}}\geq \sigma^2_{\text{ob}0}$, and
    \begin{equation}\label{rt:p_prop1_2}
    P_{\text{fx}}=\sqrt{2\lambda w(\sigma^2_\theta-\sigma^2_{\text{ob}})\sigma^2_{\text{ch}}} - \sigma^2_{\text{ch}}
\end{equation}
    if $w> w_0,\sigma^2_{\text{ob}}< \sigma^2_{\text{ob}0}$, where $w_0$ and $\sigma^2_{\text{ob}0}$ are, respectively, given by
    \begin{align} \label{eq:w0}
        w_0 &= \frac{(1+\sigma^2_{\text{ch}})^2} {2\lambda\sigma^2_\theta\sigma^2_{\text{ch}}}, \\
        \label{eq:r_ob0}
        \sigma^2_{\text{ob}0} &= \sigma^2_\theta\left(1-\frac{w_0}{w}\right).
    \end{align}
\end{proposition}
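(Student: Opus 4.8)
The statement bundles two claims: first, that under a fixed transmit power the long-run objective of \textbf{P}$_0$ collapses to the single-variable objective \eqref{prob:1_0}; and second, that the resulting program \textbf{P}$_1$ has the stated closed-form minimizer. The plan is to prove these in turn, and I expect the reduction, not the optimization, to be where the real work lies.

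For the reduction, the distortion piece is immediate: with $P_k = P_{\text{fx}}$ fixed in every busy block, \eqref{eq:theta_dist} gives a constant $D_k = D_{\text{fx}} = \sigma^2_{\text{ob}} + \frac{(\sigma^2_\theta - \sigma^2_{\text{ob}})\sigma^2_{\text{ch}}}{\sigma^2_{\text{ch}} + P_{\text{fx}}}$, so $\frac1K\sum_k w D_k \to w D_{\text{fx}}$, which is exactly the last two terms of \eqref{prob:1_0}. The effort is in the AoI term. Since the energy arrivals are i.i.d. Bernoulli$(\lambda)$ and the sensor fires as soon as it has banked $P_{\text{fx}}$ units, the inter-transmission times $X_l = \tau_{\text{H}l}$ are i.i.d., each equal to the number of blocks needed to collect $P_{\text{fx}}$ units of energy, i.e. a sum of $P_{\text{fx}}$ independent geometric$(\lambda)$ waiting times. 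The AoI is then a renewal (sawtooth) process whose $l$-th cycle of length $Y_l$ accumulates area $\sum_{i=1}^{Y_l} i = \frac{Y_l(Y_l+1)}{2}$, since $\Delta$ runs through $1,2,\dots,Y_l$. I would invoke the renewal--reward theorem to write $\bar\Delta = \frac{\mathbb{E}[Y(Y+1)/2]}{\mathbb{E}[Y]} = \frac{\mathbb{E}[Y^2]}{2\mathbb{E}[Y]} + \frac12$, noting that the edge corrections $Y_1 = X_1+1$ and $Y_L = X_L-1$ are $O(1)$ out of $L\to\infty$ cycles and so vanish in the average, leaving $Y$ distributed as $\tau_{\text{H}}$. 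Substituting the negative-binomial moments $\mathbb{E}[\tau_{\text{H}}] = P_{\text{fx}}/\lambda$ and $\mathbb{E}[\tau_{\text{H}}^2] = P_{\text{fx}}^2/\lambda^2 + P_{\text{fx}}(1-\lambda)/\lambda^2$ and simplifying should collapse everything to $\bar\Delta = \frac{P_{\text{fx}}+1}{2\lambda}$, the first term of \eqref{prob:1_0}.

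For the optimization, I would drop the integrality of Assumption A2 and treat \textbf{P}$_1$ as a continuous program over $P_{\text{fx}} \in [1,\infty)$. Abbreviating the objective as $f(P_{\text{fx}})$, its second derivative $f''(P_{\text{fx}}) = \frac{2w(\sigma^2_\theta - \sigma^2_{\text{ob}})\sigma^2_{\text{ch}}}{(\sigma^2_{\text{ch}}+P_{\text{fx}})^3}$ is strictly positive whenever $\sigma^2_\theta > \sigma^2_{\text{ob}}$, the regime in which raising power actually reduces distortion, so $f$ is convex and its unique unconstrained stationary point, from $f'(P_{\text{fx}})=0$, is $P^\ast = \sqrt{2\lambda w(\sigma^2_\theta - \sigma^2_{\text{ob}})\sigma^2_{\text{ch}}} - \sigma^2_{\text{ch}}$, matching \eqref{rt:p_prop1_2}. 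By convexity the constrained minimizer is $P^\ast$ if $P^\ast \ge 1$ and the boundary $P_{\text{fx}}=1$ otherwise. The remaining step is to rewrite the threshold: $P^\ast \ge 1$ squares to $2\lambda w(\sigma^2_\theta - \sigma^2_{\text{ob}})\sigma^2_{\text{ch}} \ge (1+\sigma^2_{\text{ch}})^2$, and dividing by $2\lambda\sigma^2_\theta\sigma^2_{\text{ch}}$ turns this into $w(\sigma^2_\theta - \sigma^2_{\text{ob}}) \ge w_0\sigma^2_\theta$ with $w_0$ as in \eqref{eq:w0}, i.e. $\sigma^2_{\text{ob}} \le \sigma^2_\theta(1-w_0/w) = \sigma^2_{\text{ob}0}$. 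Observing that $\sigma^2_{\text{ob}0}>0$ forces $w>w_0$, this holds (over admissible $\sigma^2_{\text{ob}}\ge 0$) precisely when $w>w_0$ and $\sigma^2_{\text{ob}}<\sigma^2_{\text{ob}0}$; the De Morgan complement is exactly the stated corner condition $w\le w_0$ or $\sigma^2_{\text{ob}}\ge\sigma^2_{\text{ob}0}$ under which $P_{\text{fx}}=1$.

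The routine calculus of the second part is not the difficulty; the delicate step is the AoI reduction. I expect the main obstacle to be carrying out the renewal--reward passage cleanly: identifying the correct cycle, confirming that the boundary blocks $Y_1$ and $Y_L$ contribute negligibly as $K\to\infty$, and above all getting $\mathbb{E}[\tau_{\text{H}}^2]$ right, since it is the variance term carrying the $(1-\lambda)$ factor that conspires with $\mathbb{E}[\tau_{\text{H}}]$ and the trailing $\frac12$ to produce the clean $\frac{P_{\text{fx}}+1}{2\lambda}$. A secondary subtlety is the degenerate regime $\sigma^2_{\text{ob}}\ge\sigma^2_\theta$, where $f$ is no longer convex; there I would argue separately that both the AoI and the distortion terms are nondecreasing in $P_{\text{fx}}$, so $P_{\text{fx}}=1$ is optimal, which is consistent because that regime already falls under $\sigma^2_{\text{ob}}\ge\sigma^2_{\text{ob}0}$.
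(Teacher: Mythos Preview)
Your proposal is correct and follows essentially the same route as the paper: identify the inter-transmission time as negative binomial, apply a renewal--reward argument to the sawtooth AoI to obtain $\bar\Delta = \tfrac12 + \mathbb{E}[\tau_{\text{H}}^2]/(2\mathbb{E}[\tau_{\text{H}}]) = (P_{\text{fx}}+1)/(2\lambda)$, then differentiate the resulting one-variable objective and do the threshold analysis. Your treatment is in fact slightly more careful than the paper's, which does not explicitly invoke convexity or discuss the degenerate regime $\sigma^2_{\text{ob}}\ge\sigma^2_\theta$, but the underlying argument is the same.
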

\begin{proof}
    See Appendix \ref{prf:fixed}.
\end{proof}

In general, the average AoI is much larger than the average distortion (less than unity).
    If $w$ is small, the distortion would contribute little to the objective function, and thus
 it would be wise to transmit more frequently at a small transmit power, e.g., $P_{\text{fx}}=1$.
    It is also noted that the distortion $D_{\text{fx}}$ (cf. \eqref{eq:theta_dist}) is monotonically increasing with $P_{\text{fx}}$ if $\sigma^2_{\text{ob}}> \sigma^2_{\text{ob}0}$.
In this case, it would also be better to transmit at $P_{\text{fx}}=1$, regardless of $w$.
    Moreover, although the  $P_{\text{fx}}$ is used as an integer in the proof (cf. \eqref{apx:a_1}) of the proposition,  the solution given in \eqref{rt:p_prop1_2}  is not necessarily an integer and is optimal for \textit{Problem} \textbf{P}$_1$ in all cases.

\begin{figure}[!t]
\centering
\includegraphics[width=3.7in]{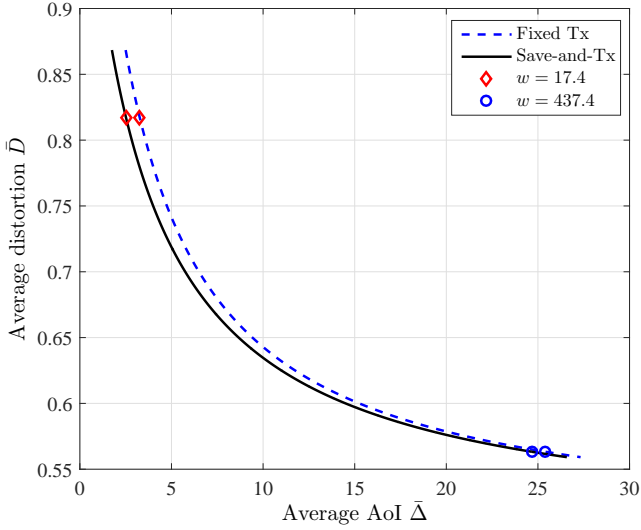}
\caption{AoI-distortion trade-off.  } \label{fig:age_distortion_trade}
\end{figure}

\subsection{Save-and-Transmit Policy}
In the save-and-transmit policy, the sensor stays idle and saves all the harvested energy in its energy buffer for a period of $o(K)$ blocks, in which $o(K)\rightarrow\infty$ and is an infinitesimal of $K$.
    Afterwards, the sensor performs observations and transmissions with a fixed transmit power $P_{\text{sv}}$ and a fixed inter-transmission time $X_{\text{sv}}$.
Similar to \cite{Ulukus-2012-awgn}, we see that the probability for the sensor not having enough energy to perform transmissions once in each $X_{\text{sv}}$ blocks goes to zero as $K\rightarrow\infty$.
    In this part, we do not require the transmit power $P_{\text{sv}}$ to be an integer and have
    \begin{align}
        P_{\text{sv}} &= \lambda X_{\text{sv}}, \\
        D_{\text{sv}} &= \sigma^2_{\text{ob}} +  \frac{(\sigma^2_\theta-\sigma^2_{\text{ob}})\sigma^2_{\text{ch}}}
                                        {\sigma^2_{\text{ch}}+\lambda X_{\text{sv}} }, \\
        \Delta_{\text{sv}} &= \frac12 (Y_{\text{sv}}+1).
    \end{align}

For a given  $w$, we will then optimize $P_{\text{sv}}$ and $X_{\text{sv}}$ by solving the following problem.
    \begin{align}
\label{prob:2_0}
  (\textbf{P}_2)~~~
    \min\limits_{P_{\text{sv}}}~~~~&\frac{P_{\text{sv}}+\lambda}{2\lambda} + w\sigma^2_{\text{ob}} +  \frac{w(\sigma^2_\theta-\sigma^2_{\text{ob}})\sigma^2_{\text{ch}}}
                                        {\sigma^2_{\text{ch}}+ P_{\text{sv}}}   \\
 \label{prob:2_1}
        \text{subject~to}~~ & P_{\text{sv}}\geq\lambda.
\end{align}

It is seen that \textit{Problem} \textbf{P}$_2$ shares a similar solution as  \textit{Problem} \textbf{P}$_1$.
    Thus, $P_{\text{sv}}$ would be given by \eqref{rt:p_prop1_2} and we have $w'_0= (\lambda+\sigma^2_{\text{ch}})^2 /(2\lambda\sigma^2_\theta\sigma^2_{\text{ch}})$,  $\sigma'^2_{\text{ob}0} =\sigma^2_\theta(1-w'_0/w)$.
Howeverver, it should be noted that $\Delta_{\text{sv}}$ is a constant (since $X_{\text{sv}}$ is deterministic) while $\Delta_{\text{fx}l}$ is random (since $X_{\text{fx}l}=\tau_{\text{H}l}$ is random).
    Moreover, the AoI $\Delta_{\text{sv}}$ would be slightly smaller than $\widebar\Delta_{\text{fx}}$ (cf. \eqref{apx:delta_fix}) since $\lambda$ is less than one in general.
More importantly, by using a long energy saving phase, the sensor node can operate as if it has a constant power supply.
    Therefore, the age-distortion trade-off achieved by the save-and-transmit policy defines the performance limit of the sensing system.

Fig. \ref{fig:age_distortion_trade} presents the trade-off between the average AoI and the average distortion.
    We set $\lambda=0.4$, $\sigma^2_\theta=1$, $\sigma^2_{\text{ob}}=0.5$,  $\sigma^2_{\text{ch}}=2.8$, and $w$ taking values from $\Omega=\{w_0:0.5:500\}$.
From, \eqref{eq:w0} and \eqref{eq:r_ob0}, we have $w_0=12.8929$ and $\sigma^2_{\text{ob}0}=0.6777$, and thus the above parameters ensures that $P_{\text{fx}}=P_{\text{sv}}$ are larger than unity.
    As is shown, the achievable average AoI is relatively small while the average distortion is relatively large when $w$ is with a small value (e.g., $w=17.4$), and vise vasa.
It is also observed that the performance of the fixed transmission is lower bounded by that of the save-and-transmit policy.

\section{Offline Power Control} \label{sec:4_offline}
For the offline power control, we do not require the transmit power to be an integer and assume that the energy harvesting process is known at the sensor non-causally.

\subsection{Problem Reformulation}
We consider a period with $K$ blocks and $L-1$ busy blocks.
    We denote the transmit power of the $l$-th busy block as $P_l$ and rewrite the energy harvested in the $j$-th block of $X_l$ as $E_{lj}$.
As is shown in Fig. \ref{fig:aoi}, $P_l$ is determined primarily by the energy harvested during the previous inter-transmission time $X_{l-1}$.
    Moreover, the distortion $D_l$ is determined by $P_l$ as shown in equation \eqref{eq:theta_dist} and does not change during $Y_{l+1}$.
Thus, \textit{Problem} \textbf{P}$_0$ can be rewritten as
\begin{align}
\label{prob:3_0}
 (\textbf{P}_3)
    \min\limits_{\{P_l, Y_l, L\}}  \,&\frac1K \sum_{l=1}^L
                    (\delta_l+wD_{l-1}Y_{l}) \\
 \label{prob:3_1}
        \text{subject~to} \,& \sum_{l=1}^L Y_l  = K, \\
 \label{prob:3_2}
                    & \sum_{i=1}^l P_i \leq \sum_{i=1}^{l}\sum_{j=1}^{X_{l}} E_{ij},~~~\forall~ \,1\leq l\leq L,
\end{align}
    where $D_0=\sigma^2_\theta$ and $\delta_l=(Y_l+Y_l^2)/2$ is the total AoI during the $l$-th inter-departure time (cf. Fig. \ref{fig:aoi}).

However, the optimization over $\{X_l\}$ and $\{P_l\}$ are correlated and \textit{Problem} \textbf{P}$_3$ is not convex in $\{X_l\}$.
    Therefore, we shall solve the problem by using an iterative algorithms.

\subsection{Optimal Transmit Power}
For each given $L$ and each feasible (i.e., \eqref{prob:3_1} is satisfied) sequence of inter-transmission time $\{X_l, l=1,2,\cdots,L\}$,  the inter-departure time $\{Y_l, l=1,2,\cdots,L\}$ and the average AoI of the system can be determined, and thus \textit{Problem} \textbf{P}$_3$ is equivalent to
\begin{align}
\label{prob:3_2_0}
 (\textbf{P}'_3)~~~
    \min\limits_{\{ P_l\}}~  \,&\frac1{K} \sum_{l=1}^{L}  \frac{Y_{l}}
                                        {\sigma^2_{\text{ch}}+P_{l-1}} &
                       \quad \\
 \label{prob:3_2_1}
        \text{subject~to} ~
                    & \sum_{i=1}^l P_i \leq \sum_{i=1}^{l}\sum_{j=1}^{X_{l}} E_{ij}, &\forall~ 1\leq l\leq L-1.
\end{align}

\begin{proposition} \label{prop:3_off p}
    The optimal transmit power $P_l$ for \textit{Problem} \textbf{P}$'_3$ is given by
    \begin{equation}\label{rt:off_opt_p}
        P_l = \max \left(\sqrt{\frac{Y_l}{K \nu_l} } -\sigma^2_{\text{ch}}, e_\varepsilon\right),
    \end{equation}
    where $e_\varepsilon$ is a very small positive valued transmit power to indicate a busy blocks and $\nu$ is the water-level given by
    \begin{equation}
        \nu_l = \sum\nolimits_{j=l}^L \mu_j, ~~l=1,2,\cdots, L-1
        \end{equation}
        and $\mu_j$ is the positive Lagrangian multiplier.
\end{proposition}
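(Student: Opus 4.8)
The plan is to recognize that \textit{Problem} $\mathbf{P}'_3$ is a convex program in the power vector $(P_1,\dots,P_{L-1})$ and to characterize its minimizer through the Karush--Kuhn--Tucker (KKT) conditions, from which the water-filling structure in \eqref{rt:off_opt_p} emerges. First I would verify convexity: for fixed $Y_l$ the map $P_{l-1}\mapsto Y_l/(\sigma^2_{\text{ch}}+P_{l-1})$ is convex on $P_{l-1}>-\sigma^2_{\text{ch}}$ (its second derivative $2Y_l/(\sigma^2_{\text{ch}}+P_{l-1})^3$ is positive), so the objective \eqref{prob:3_2_0} is a sum of convex terms, while the causality constraints \eqref{prob:3_2_1} and the implicit non-negativity constraints are linear. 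Hence the KKT conditions are both necessary and sufficient for global optimality, and it suffices to exhibit a point satisfying them.

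Next I would attach non-negative multipliers $\mu_l$ to each energy-causality constraint in \eqref{prob:3_2_1} and form the Lagrangian. The crucial step is the stationarity condition. Since the decision variable $P_l$ enters the objective only through the single term carrying the inter-departure interval $Y_l$ whose distortion it fixes (cf. \eqref{eq:theta_dist}), while it enters every causality constraint indexed $j\ge l$, differentiating in $P_l$ balances the marginal objective gain $-Y_l/[K(\sigma^2_{\text{ch}}+P_l)^2]$ against the aggregated multiplier $\sum_{j\ge l}\mu_j$. Setting this to zero and defining the water level $\nu_l=\sum_{j=l}^L\mu_j$ gives $(\sigma^2_{\text{ch}}+P_l)^2=Y_l/(K\nu_l)$, and solving for $P_l$ produces the square-root expression. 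The requirement that a busy block carry at least a marginal power $e_\varepsilon$ is then imposed by projecting the unconstrained stationary value onto the feasible half-line via complementary slackness, which yields the outer $\max(\cdot,e_\varepsilon)$ in \eqref{rt:off_opt_p}.

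Finally I would close the argument by interpreting $\nu_l$ as a backward water level. Because $\nu_l=\sum_{j=l}^L\mu_j$ is non-increasing in $l$, the permitted power is replenished as energy accumulates and is capped by future causality constraints, which is precisely the backward water-filling picture; I would verify that the candidate simultaneously satisfies primal feasibility \eqref{prob:3_2_1} and complementary slackness, so that the binding constraints fix the active water levels. I expect the main obstacle to be the index bookkeeping in the stationarity step---correctly identifying which objective term and which subset of the multipliers $\{\mu_j\}$ respond to a perturbation of a single $P_l$---together with showing that the resulting water levels can be chosen non-negative and consistent with the binding pattern of the causality constraints, so that the KKT point is genuinely feasible rather than merely stationary.
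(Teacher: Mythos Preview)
Your approach is essentially the same as the paper's: both form the Lagrangian with multipliers $\mu_l$ on the causality constraints, take the stationarity condition in $P_l$, collect the tail sum $\nu_l=\sum_{j\ge l}\mu_j$ as a water level, and invert $(\sigma^2_{\text{ch}}+P_l)^2=Y_{l+1}/(K\nu_l)$ to obtain the square-root formula. Your write-up is in fact more careful than the paper's (you argue convexity and KKT sufficiency and handle the $\max(\cdot,e_\varepsilon)$ via complementary slackness), but do watch the indexing: in the objective \eqref{prob:3_2_0} the variable $P_l$ multiplies $Y_{l+1}$, not $Y_l$, so the stationarity derivative is $-Y_{l+1}/[K(\sigma^2_{\text{ch}}+P_l)^2]$ as in the paper's Appendix, and the proposition's use of $Y_l$ is a notational slip you should not propagate.
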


    \begin{proof}
        See \textit{Appendix} \ref{prf:3_off p}
    \end{proof}

We note that
    \begin{equation} \label{eq:nul}
        -\nu_l=\frac{-Y_{l+1}}{K(\sigma^2_{\text{ch}}+P_{l})^2}
     \end{equation}
     is equal to the first order derivative of objective function $\mathcal{J}=\frac1{K} \sum_{l=0}^{L-1}  \frac{Y_{l+1}} {\sigma^2_{\text{ch}}+P_{l}}$ with respect to $P_l$.
Thus, $\nu_l$ presents the marginal gain of using more power in the $l$-th busy block and $\mathcal{J}$ would be minimized if each $\nu_l$ is as close to each other as possible (under constraint \eqref{prob:3_2_1}).
    We denote the harvested energy during the $l$-th inter-transmission time $X_l$ as
    \begin{equation}
        e_l=\sum\nolimits_{j=1}^{X_l}E_{lj}.
    \end{equation}

Similar to \cite{JYang-2012-jcn-backward}, the optimal transmit powers can be obtained using a generalized backward water-filling algorithm, as shown in \textit{Algorithm} \ref{alg:wb_wf}.
    To be specific, the algorithm starts from $X_L$ by setting $P_{L-1}=e_{L-1}$ and calculating $\nu_{L-1}$ using \eqref{eq:nul}.
    Next, we move to $X_{L-1}$ and pour $e_{L-2}$ to the first block of  $X_{L-1}$ until $e_{L-2}$ is depleted or $\nu_{L-2}=\nu_{L-1}$.
In the latter case, the remaining energy $e_\text{r}$ will be poured into (the first blocks of) $X_L$ and $X_{L-1}$ in such a way that $\nu_{L-2}$ and $\nu_{L-1}$ remain equal.
   This process stops when the energy $e_1$ harvested in $X_1$ has been used and $P_1$ has been determined.

\begin{algorithm}[!t]
\algsetup{linenosize=\scriptsize}
\scriptsize
\caption{Weighted backward water-filling}
\begin{algorithmic}[1]\label{alg:wb_wf}
\REQUIRE ~~\\
\STATE Set $\Delta e = 10^{-4}$ and $\epsilon= 10^{-5}$;
    \STATE Set $P_l=0, \nu_l=\infty, l=1,2,\cdots,L$;
    \STATE Pour energy $e_{L-1}$ into the first block of $X_L$ and set $P_{L-1}=e_L$;
    \STATE Calculate $\nu_{L-1}$ using \eqref{eq:nul};
\ENSURE ~~\\
\FOR {$l=L-2$ to $1$}
    \STATE Pour energy $e_l$ into $X_{l+1}$ and set $P_l=e_l$;
    \STATE Calculate $\nu_l$ using \eqref{eq:nul};
    \IF {$\nu_l<\nu_{l+1}$}
        \STATE Reset transmit power $P_l$ by \eqref{rt:off_opt_p};
        \STATE Update remaining energy as $e_\text{r}=e_l-P_l$;
        \STATE Update water-level as $\nu_l = \nu_{l+1}$;
        \WHILE {$e_\text{r}>0$}
            \STATE Find index $i=\arg \max_{l\leq j\leq L} \nu_j$;
            \STATE  Reset $P_i=P_i+\Delta e$ and update $\nu_i$ using \eqref{eq:nul};
            \STATE  $e_\text{r}=e_\text{r}-\Delta e$;
        \ENDWHILE
    \ENDIF
\ENDFOR
 \STATE \textbf{Output:} $\{P_l\}$.
\end{algorithmic}
\end{algorithm}

\subsection{Optimal Inter-Transmission Time}
For each given $L$ and each feasible (i.e., \eqref{prob:3_2} is satisfied) sequence of transmit power $\{P_l, l=1,2,\cdots,L-1\}$, the distortion sequence $\{D_l\}$ can be calculated by \eqref{eq:theta_dist} and \textit{Problem} \textbf{P}$_3$ reduces to
\begin{align}
\label{prob:3_1_0}
 (\textbf{P}''_3)~~~
    \min\limits_{\{ X_l\}}~~~  \,&\frac1{2K} \sum_{l=1}^{L}
                     ( Y_l^2 + 2wD_{l-1}Y_l ) \quad \\
 \label{prob:3_1_1}
        \text{subject~to} ~~& -\sum_{i=1}^l X_i  \leq -k_l, ~\forall~1\leq l\leq L,\\
        & X_l \in \mathcal{X}, \qquad\quad~~~\forall~1\leq l\leq L,
\end{align}
where $\mathcal{X}= \{ 1,2, \cdots, K\}$ is the feasible set for the inter-transmission time $X_l$.
    Moreover, \eqref{prob:3_1_1} presents the energy causality of the system and $k_l$ is the first block by which the sensor has harvested enough energy for the first $l$ busy blocks, respectively, with transmit power $P_1, P_2,\cdots$, and $P_l$.

Since we have $Y_1=X_1+1$, $Y_L=X_L-1$,  and $Y_l=X_l$ for $2\leq l\leq L-1$, we shall replace $Y_l$ with $X_l$ in \textit{Problem} \textbf{P}$''_3$.

It is observed that \textit{Problem} \textbf{P}$''_3$ is a separable integer optimization problem and can be solved by the backward dynamic programming \cite[Chapt. 7.1.1]{Li-integeropt-2006}.
    When $L$ is large, however, the backward dynamic programming algorithms is almost impossible to implement due to the `curse of dimensionality'.
Moreover, searching for the optimal number (i.e., $L$) of busy blocks is also computation consuming.
    Therefore, we shall jointly solve \textit{Problems} \textbf{P}$'_3$ and \textbf{P}$''_3$ using a genetic algorithm, as shown in \textit{Algorithm} \ref{alg:genetic}.

We refer to each possible length of an inter-transmission time $X_l$ as a gene-instance and refer to a row vector of $K$ gene-instances as a chromosome.
    When $N_{\text{pop}}$ chromosomes are considered, we denote the chromosomes as $\boldsymbol{x}_i, i=1,2,\cdots,N_{\text{pop}}$ and denote the gene-instances as $x_{ik},  k=1,2,\cdots,K$.
Thus, we have $\boldsymbol{x}_i=[x_{i1}, x_{i2},\cdots,x_{iK}]$ and $x_{ik}\in\mathcal{X}$.
    Note that $\boldsymbol{x}_i$ is not necessarily a feasible solution to \textit{Problem} \textbf{P}$''_3$ since $\sum_{k=1}^{K} x_{ik}$ is most probably larger than $K$.

In the initialization phase, $N_{\text{pop}}$ chromosomes are generated by uniformly drowning each $x_{ik}$ from $\mathcal{X}$.
    In each of the following $N_{\text{iter}}$ iterations, the fitness of the chromosomes is evaluated first.

To be specific, we take as many gene-instances $x_{ik}$ as possible from a chromosome $\boldsymbol{x}_i$ until their sum exceeds $K$ for the first time, i.e., $\sum_{k=1}^{L_i} x_{ik}\geq K$.
    Afterwards, the last gene-instance $x_{iL_i}$ will be updated by removing the excess portion, i.e., $x_{iL_i}=x_{iL_i}- (\sum_{k=1}^{L_i} x_{ik}-K)$.
It is clear that $\{x_{i1},x_{i2}, \cdots, x_{iL_i}\}$ (removing zero elements if any) is a feasible sequence of inter-transmission times and the corresponding optimal transmit power $\{P_{i1},P_{i2}, \cdots, P_{iL_i}\}$ can be obtained by \textit{Algorithm} \ref{alg:wb_wf}.
    Furthermore, the total age $\delta_l$ and the total distortion $D_{l-1}Y_l$ of each inter-departure time, as well as the weighted-sum AoI and distortion (cf. \eqref{prob:3_0}) can be calculated readily.
We define the inverse of the weighted-sum AoI and distortion as the fitness of the chromosome, i.e.,
\begin{align}\label{df:fitness}
    f_i = K \left(\sum_{l=1}^{L_i} (\delta_{il}+wD_{l-1}x_{il})\right)^{-1}.
\end{align}

\begin{algorithm}[!t]
\algsetup{linenosize=\scriptsize}
\scriptsize
\caption{Genetic based joint optimization}
\begin{algorithmic}[1]\label{alg:genetic}
\REQUIRE ~~\\
    \STATE Initialize the energy harvesting process;
    \STATE Initialize the chromosome population by uniformly selecting genes from $\mathcal{X}$;
\ENSURE ~~\\
    \FOR {$l=1$ to $N_{\text{iter}}$}
        \STATE For each chromosome, take the first few genes to form a feasible inter-transmission sequence, calculate the optimal transmit power sequence using \textit{Algorithm} \ref{alg:wb_wf}, calculate the fitness by \eqref{df:fitness};
        \STATE Select $N_{\text{parent}}$ parent chromosomes using selection rate $q_{\text{sel}}$;
        \STATE Generate $N_{\text{pop}}-N_{\text{parent}}$ child chromosomes using crossover depth $D_{\text{cross}}$;
        \STATE Record the average weighted-sum AoI and distortion $J_l$, the inter-transmission time $\{X_l\}$, the transmit power $\{P_l\}$ of the best chromosome;
    \ENDFOR
 \STATE \textbf{Output:}  The $\{X_l\}$ and $\{P_l\}$ with the smallest $J_l$.
\end{algorithmic}
\end{algorithm}

With a normalized probability $q_i=f_i/\big(\sum_{j=1}^{N_{\text{pop}}}f_i)\big)$, each chromosome would be randomly selected and added to a parent-set.
    The selection process does not stop until the number of chromosomes in the set reaches $N_{\text{parent}}=N_{\text{pop}}q_{\text{sel}}$, where $q_{\text{sel}}$ is the selection rate.
    While these parent chromosomes are kept in the whole chromosome set, the unselected ones will be discarded.
Next, $N_{\text{child}}=N_{\text{pop}}-N_{\text{parent}}$ child chromosomes will be generated based on their respective randomly chosen parent chromosomes.
    For a pair of parent chromosomes $\boldsymbol{x}_{\text{p}}$ and $\boldsymbol{x}_{\text{m}}$, for example, two children chromosomes will be produced, respectively, by increasing $D_{\text{cross}}$ randomly chosen gene-instances of $\boldsymbol{x}_{\text{p}}$ by one and reducing $D_{\text{cross}}$ randomly chosen gene-instances of $\boldsymbol{x}_{\text{m}}$ by one, where $1\leq D_{\text{cross}}\leq K$ is referred to as the crossover depth.
If any gene-instance turns to be negative, it will  be reset to zero.
    After the crossover operation, the algorithm goes to the next iteration and finally terminates at the $N_{\text{iter}}$-th iteration.

Since the weighted-sum AoI and distortion of each iteration is not strictly decreasing, we shall record the best chromosome of each iteration and take the best one among them as the final output.

\section{Online Power Control}\label{sec:5_online}

In this section, we investigate the online power control of the system, in which the sensor node knows the energy harvesting process casually and adjusts its transmit power based on the current AoI, distortion, and energy state in real-time.

\subsection{Online Problem Formulation}
Since the energy harvesting rate $\lambda$ is less than unity, the probability for the sensor to have a large amount of energy in the buffer, i.e., $B_k\rightarrow\infty$, is approximately zero.
    Also, the probability for the AoI $\Delta_k$ to be very large is approximately zero, since the energy harvesting rate is strictly positive.
Therefore, it is reasonable to assume that $\Delta_k$ and $B_k$ are upper bounded by $\delta_{\text{max}}$ and $b_{\text{max}}$, respectively.

Motivated by the Markovian structure of the AoI process, we cast the online power control problem as an MDP, as shown below.
\begin{itemize}
    \item \textbf{States:} We define the state of the system as $\boldsymbol{s}=(\delta, d, b)$, where $\delta\in \{1,2,\cdots,\delta_{\text{max}}\}$ is the current AoI, $d>0$ is the current distortion, and $b\in\{1,2,\cdots,b_{\text{max}}\}$ is the available energy at the sensor.
            We denote the set of all feasible states as the state space $\mathcal{S}$.
    \item \textbf{Actions:} An action defines the transmit power $P_k=p\in\{0,1,\cdots,b\}$ chosen by the sensor in the current block.
                When an action is taken, the state of the system will be changed in the next block.
            To be specific, the $\delta$ returns to $1$ if $p>0$ and goes to $\delta+1$ if $p=0$ while $d=D(p)$ if $p>0$ and keeps unchanged if $p=0$, in which $D(p)=\sigma^2_{\text{ob}} +  {(\sigma^2_\theta-\sigma^2_{\text{ob}})\sigma^2_{\text{ch}}}/
                                        {(\sigma^2_{\text{ch}}+p)}$ is given by \eqref{eq:theta_dist}.
    \item \textbf{Transition probabilities:} With an action $P_k=p$, the system transits from state $\boldsymbol{s}=(\delta, d, b)$ to state $\boldsymbol{t}=(\delta', d', b')$ with probability
        \begin{equation}\label{rt:P_st_0}
            \textsf{P}_{st}(p=0) = \left\{
                \begin{aligned}
                &\lambda, &&\text{if}~\boldsymbol{t}=(\delta+1,d,b+1), \\
                &1-\lambda        &&\text{if}~\boldsymbol{t}=(\delta+1,d,b),\\
                &0  &&\text{else},
                \end{aligned}
             \right.
        \end{equation}
        \begin{equation}\label{rt:P_st_1}
        \noindent    \textsf{P}_{st}(1\leq p\leq b) = \left\{
                \begin{aligned}
                &\lambda, &&\text{if}~\boldsymbol{t}=(1,D(p),b+1-p),\\
                &1-\lambda        &&\text{if}~\boldsymbol{t}=(1,D(p),b-p),\\
                &0  &&\text{else}.
                \end{aligned}
             \right.
        \end{equation}
        From \eqref{rt:P_st_0} and \eqref{rt:P_st_1}, it is inferred that any state $\boldsymbol{s} = (\delta, D(p), b)$ satisfying $\delta+(b_{\text{max}}-p)<b$ would always be inactive, since the age $\delta$ is too small for the sensor to accumulate enough energy for energy state $b$.
            Thus, we shall update the state space $\mathcal{S}$ by excluding these states.
    \item \textbf{Cost:} For a given state $\boldsymbol{S}_k=\boldsymbol{s}$ and action $P_k=p$, the cost $C(\boldsymbol{s}, p)$ is the weighted-sum AoI and distortion of the next block, i.e., $C(\boldsymbol{s}, p)=\delta_{k+1}+wd_{k+1}$.
            It is clear that $C(\boldsymbol{s}, p)=1+wD(p)$ if $p>0$ and $C(\boldsymbol{s}, p)=\delta+1+wd$ if $p=0$.
    \item \textbf{Policy:} A policy $\boldsymbol{\pi}$ is a rule for choosing actions (transmit power) for each state, i.e., a mapping from the state space $\mathcal{S}$ to the feasible power space $\{0,1,\cdots,b\}$.
\end{itemize}

For the online power control of the system, we shall seek such a policy $\boldsymbol{\pi}^*$ that minimizes the average cost of the system with any initial state $\boldsymbol{s}$, as shown in the following optimization problem.
\begin{equation}
    (\textbf{P}_4) ~~ \phi_{\boldsymbol{\pi}^*}(\boldsymbol{s}) = \min\limits_{\boldsymbol{\pi}} \frac1K\mathbb{E}
    \left[ \sum_{k=0}^\infty C(\boldsymbol{S}_k, P_k)\Big|\boldsymbol{S}_0=\boldsymbol{s}  \right]
\end{equation}
for all $\boldsymbol{s}\in\mathcal{S}$.

\subsection{Expected Total Discounted Cost}
As shown in \cite[Chap.~6.7,~\textit{Theorem}~6.17]{Ross-MDP-1970}, \textit{Problem} \textbf{P}$_4$ can be solved by the following functional equation.
\begin{equation}\label{eq:mdp_exp}
    g+h(\boldsymbol{s}) =\min\limits_{p}\left\{ C(\boldsymbol{s}, p) + \sum_{\boldsymbol{t}\in\boldsymbol{T}_{\boldsymbol{s},p}}
             \textsf{P}_{\boldsymbol{st}}(p) h(\boldsymbol{t}) \right\},
\end{equation}
where $g$ is a constant, $h(\boldsymbol{s})$ is a bounded function, and $\boldsymbol{T}_{\boldsymbol{s},p}$ is the set of possible states transited from state $\boldsymbol{s}$ when action $p$ is taken.

However, it is noted that equation \eqref{eq:mdp_exp} is not a contraction mapping.
    The searching process with \eqref{eq:mdp_exp}, therefore, may not be convergent or converge very slowly.
This motivates us to consider an alternative expected total $\alpha$-discounted cost,
\begin{equation}
    (\textbf{P}'_4) ~~ V_{\boldsymbol{\pi}_\alpha^*}(\boldsymbol{s}) = \min\limits_{\boldsymbol{\pi}} \frac1K\mathbb{E}
    \left[ \sum_{k=0}^\infty \alpha^k C(\boldsymbol{S}_k, P_k) \Big|\boldsymbol{S}_0=\boldsymbol{s} \right]
\end{equation}
for all $\boldsymbol{s}\in\mathcal{S}$, in which $\alpha<0<1$ is a discounting factor.
     Moreover, the $\alpha$-optimal policy $\boldsymbol{\pi}_\alpha^*$ and the $\alpha$-optimal cost function $V_\alpha(\boldsymbol{s})$ satisfies \cite[Chap.~6.7,~(24)]{Ross-MDP-1970},
\begin{equation}\label{eq:mdp_alpha}
    V_\alpha(\boldsymbol{s}) =\min\limits_{p}\left\{ C(\boldsymbol{s}, p) + \alpha \sum_{\boldsymbol{t}\in\boldsymbol{T}_{\boldsymbol{s},p}}
             \textsf{P}_{\boldsymbol{st}}(p) V_\alpha(\boldsymbol{t}) \right\}.
\end{equation}

Particularly, the following theorem shows that as $\alpha$ approaches unity, $\boldsymbol{\pi}_\alpha^*$ would converge to $\boldsymbol{\pi}^*$.
\begin{theorem} \label{th:mdp_alpha}
    For some sequence $\alpha_n\rightarrow1$, we have $h(\boldsymbol{s})=\lim_{n\rightarrow\infty} V_{\alpha_n}(\boldsymbol{s})-V_{\alpha_n}(\boldsymbol{s}_0)$, $g=\lim_{\alpha\rightarrow1}(1-\alpha) V_{\alpha}(\boldsymbol{s}_0)$, for any fixed reference state $\boldsymbol{s}_0$.
        In particular, \textit{Problem} \textbf{P}$_4$ and \textit{Problem} \textbf{P}$'_4$ share the same optimal policy.
\end{theorem}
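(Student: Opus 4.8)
The plan is to run the standard vanishing-discount argument: obtain the average-cost optimality equation \eqref{eq:mdp_exp} as the $\alpha\to1$ limit of the discounted equation \eqref{eq:mdp_alpha}. First I would note that the operator on the right-hand side of \eqref{eq:mdp_alpha} is an $\alpha$-contraction on the space of bounded functions on $\mathcal{S}$, so $V_\alpha$ exists, is unique, and satisfies \eqref{eq:mdp_alpha} for every $\alpha\in(0,1)$. I then introduce the relative value function $h_\alpha(\boldsymbol{s})=V_\alpha(\boldsymbol{s})-V_\alpha(\boldsymbol{s}_0)$ and subtract $V_\alpha(\boldsymbol{s}_0)$ from both sides of \eqref{eq:mdp_alpha}, which gives
\[
(1-\alpha)V_\alpha(\boldsymbol{s}_0)+h_\alpha(\boldsymbol{s})
=\min_{p}\Big\{ C(\boldsymbol{s},p)+\alpha\!\!\sum_{\boldsymbol{t}\in\boldsymbol{T}_{\boldsymbol{s},p}}\!\! \textsf{P}_{\boldsymbol{st}}(p)\,h_\alpha(\boldsymbol{t}) \Big\}.
\]
The entire proof reduces to controlling the two families $\{h_\alpha\}$ and $\{(1-\alpha)V_\alpha(\boldsymbol{s}_0)\}$ uniformly in $\alpha$ and then letting $\alpha\to1$.

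The hard part will be verifying the hypotheses of \cite[Chap.~6.7]{Ross-MDP-1970}, namely that $\{h_\alpha\}$ is uniformly bounded in both $\alpha$ and $\boldsymbol{s}$. I would establish this from the structure of the pruned chain: after removing the inactive states, $\mathcal{S}$ is finite, since the AoI is capped at $\delta_{\text{max}}$, the energy at $b_{\text{max}}$, and the distortion takes only the finitely many values $D(p)$; the one-step cost is bounded by $C(\boldsymbol{s},p)\le\delta_{\text{max}}+1+w\sigma^2_\theta$; and every state communicates with the reference state $\boldsymbol{s}_0$. Concretely, in each block a unit of energy is harvested with probability $\lambda>0$, and the sensor may transmit with $p>0$ once it has accumulated enough energy, so the expected first-passage time to $\boldsymbol{s}_0$ is finite and can be bounded uniformly over all starting states. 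A standard discounted first-passage estimate then yields $|h_\alpha(\boldsymbol{s})|\le M$ for a constant $M$ independent of $\alpha$, while $0\le(1-\alpha)V_\alpha(\boldsymbol{s}_0)$ is trivially bounded above by the maximal one-step cost.

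With uniform boundedness in hand, the limit is routine. Because $\mathcal{S}$ is finite, the vector $\big(h_\alpha(\boldsymbol{s})\big)_{\boldsymbol{s}\in\mathcal{S}}$ lives in a compact subset of $\mathbb{R}^{|\mathcal{S}|}$, so there is a sequence $\alpha_n\to1$ along which $h_{\alpha_n}(\boldsymbol{s})\to h(\boldsymbol{s})$ for every $\boldsymbol{s}$ and $(1-\alpha_n)V_{\alpha_n}(\boldsymbol{s}_0)\to g$ for some constant $g$; this is precisely the pair of limit identities claimed in the statement. Passing to the limit in the rearranged equation above is legitimate because the minimum over the finite action set and the finite transition sum are continuous in their arguments, and doing so recovers exactly the average-cost optimality equation \eqref{eq:mdp_exp} with this $g$ and $h$.

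Finally, I would invoke \cite[Chap.~6.7,~\textit{Theorem}~6.17]{Ross-MDP-1970}: once a bounded $h$ and a constant $g$ solving \eqref{eq:mdp_exp} are exhibited, $g$ is the optimal average cost of \textit{Problem} \textbf{P}$_4$ and any stationary policy attaining the minimizer in \eqref{eq:mdp_exp} is average-optimal. Since each action set $\{0,1,\cdots,b\}$ is finite, along a further subsequence the $\alpha_n$-optimal action at each state is eventually constant and must equal a minimizer of the limiting equation; hence $\boldsymbol{\pi}_\alpha^*$ converges to such a minimizer $\boldsymbol{\pi}^*$, and \textit{Problem} \textbf{P}$_4$ and \textit{Problem} \textbf{P}$'_4$ share the same optimal policy. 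The only nontrivial step is the uniform bound of the second paragraph; everything downstream is compactness and continuity.
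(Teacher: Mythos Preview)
Your proposal is correct and follows essentially the same route as the paper: verify that the pruned state space is finite and the chain is irreducible, deduce uniform boundedness of $V_\alpha(\boldsymbol{s})-V_\alpha(\boldsymbol{s}_0)$, and then invoke \cite[Chap.~6.7,~\textit{Theorem}~6.17]{Ross-MDP-1970}. The only difference is granularity: the paper obtains the uniform bound by citing \cite[Chap.~6.8,~\textit{Corollary}~6.20]{Ross-MDP-1970} directly, whereas you spell out the first-passage and compactness arguments that underlie those citations.
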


\begin{proof}
    Since all of $\delta\in\{1,2,\cdots,\delta_{\text{max}}\}$, $b\in\{1,2,\cdots, b_{\text{max}}\}$, and $d=D(p)$ have finite number of elements, the state space $\mathcal{S}$ should be finitely large.
        Based on the transition probabilities given in \eqref{rt:P_st_0} and \eqref{rt:P_st_1}, it is seen that all the neighboring states (e.g., with $\delta'-\delta=1$, or $b'-b=1$) are connected with a strictly positive probability $\lambda$.
            It is also seen that each state $(\delta, D(p), b)$ can be connected with state $(\delta, D(p+1), b)$ through some intermediate states (e.g., several $(1, D(1), b-1)$ and a $(1, D(p+1), 0)$).
   Thus, the Markov chain is irreducible.
    According to \cite[Chap.~6.8,~\textit{Corollary}~6.20]{Ross-MDP-1970}, $V_{\alpha}(\boldsymbol{s})-V_{\alpha}(\boldsymbol{s}_0)$ would be uniformly bounded, and hence the conditions of \cite[Chap.~6.7,~\textit{Theorem}~6.17]{Ross-MDP-1970} are satisfied, which yield the results in \textit{Theorem} \ref{th:mdp_alpha} immediately.
\end{proof}

\subsection{Property of the $\alpha$-optimal Policy}
In this subsection, we investigate the property of the $\alpha$-optimal policy obtained through \eqref{eq:mdp_alpha}.

We define the expected future cost function as ${V}_\alpha(\boldsymbol{s},p) = \alpha \sum_{\boldsymbol{t} \in\boldsymbol{T}_{\boldsymbol{s},p}} \textsf{P}_{\boldsymbol{st}}(p) V_\alpha(\boldsymbol{t})$.
    Since the harvested energy is either zero or one unit in each block, the potential state set after the transition from state $\boldsymbol{s}=\{\delta, d,b\}$ only has two elements, i.e., $\boldsymbol{T}_{\boldsymbol{s},p}=\{\boldsymbol{t}_0,\boldsymbol{t}_1\}$.
Thus, $\widebar{V}_\alpha(\boldsymbol{s},p)$ can be rewritten as
\begin{equation}\label{df:future_exp_V}
    \widebar{V}_\alpha(\boldsymbol{s},p) = \alpha\big( \lambda V_\alpha(\boldsymbol{t}_1) +
                                        (1-\lambda)V_\alpha(\boldsymbol{t}_0)\big).
\end{equation}

First, we present the monotonicity of the $\alpha$-optimal cost function $V_\alpha(\boldsymbol{s})$ as follows.
\begin{proposition}\label{prop:4_vproperty}
    For each state $\boldsymbol{s}=(\delta, d, b)$, $V_\alpha(\boldsymbol{s})$ is
    \begin{itemize}
        \item non-decreasing with AoI $\delta$;
        \item non-decreasing with distortion $d$;
        \item non-increasing with energy state $b$;
        \item convex in energy state $b$.
    \end{itemize}
\end{proposition}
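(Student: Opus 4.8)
The plan is to establish all four properties along the value-iteration sequence generated by the operator in \eqref{eq:mdp_alpha}. I set $V_\alpha^{(0)}(\boldsymbol{s})=0$ and define, for $n\ge 0$,
\[
V_\alpha^{(n+1)}(\boldsymbol{s}) = \min_{p}\Big\{ C(\boldsymbol{s},p) + \alpha\big(\lambda V_\alpha^{(n)}(\boldsymbol{t}_1)+(1-\lambda)V_\alpha^{(n)}(\boldsymbol{t}_0)\big)\Big\}.
\]
Because $\mathcal{S}$ is finite and $\alpha<1$, this is an $\alpha$-contraction, so $V_\alpha^{(n)}\to V_\alpha$ uniformly; since monotonicity and convexity are preserved under pointwise limits, it suffices to show that a single Bellman update preserves each property. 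The base case $V_\alpha^{(0)}\equiv0$ satisfies all four trivially, so the proposition follows once the induction step is verified. Throughout I write $Q^{(n+1)}(\boldsymbol{s},p)$ for the bracketed action-value, so that $V_\alpha^{(n+1)}(\boldsymbol{s})=\min_p Q^{(n+1)}(\boldsymbol{s},p)$.

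For the two non-decreasing claims I would couple two states differing in a single coordinate and compare $Q^{(n+1)}$ action by action. The structural fact read off from \eqref{rt:P_st_0}, \eqref{rt:P_st_1}, and the cost expression is that for every active action $p\ge 1$ the immediate cost equals $1+wD(p)$ and both successors have AoI $1$ and distortion $D(p)$, independent of the current $\delta$ and $d$; hence $\delta$ and $d$ influence $Q^{(n+1)}$ only through the idle action $p=0$. For $p=0$, raising $\delta$ by one increases the immediate cost by exactly one and advances both successors from AoI $\delta+1$ to $\delta+2$, which does not lower the future cost by the induction hypothesis on $\delta$; likewise, raising $d$ increases the cost linearly and shifts the successor distortion upward, which does not lower the future cost by the induction hypothesis on $d$. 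Since each per-action value moves in the claimed direction, so does their minimum. For the non-increasing claim in $b$, compare $\boldsymbol{s}=(\delta,d,b)$ with $\boldsymbol{s}'=(\delta,d,b+1)$: every action $p\in\{0,\dots,b\}$ feasible at $\boldsymbol{s}$ is feasible at $\boldsymbol{s}'$, leaves the immediate cost unchanged, and routes each successor to a state with one extra unit of energy, so by the induction hypothesis $Q^{(n+1)}(\boldsymbol{s}',p)\le Q^{(n+1)}(\boldsymbol{s},p)$ for each such $p$. As $\boldsymbol{s}'$ additionally admits $p=b+1$, its minimum is taken over a superset, and both effects give $V_\alpha^{(n+1)}(\boldsymbol{s}')\le V_\alpha^{(n+1)}(\boldsymbol{s})$.

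The crux is convexity in $b$, namely $V_\alpha^{(n+1)}(\delta,d,b+1)+V_\alpha^{(n+1)}(\delta,d,b-1)\ge 2V_\alpha^{(n+1)}(\delta,d,b)$ for fixed $(\delta,d)$; here the usual obstruction is that the minimization runs over the $b$-dependent set $\{0,\dots,b\}$ and a pointwise minimum of convex functions need not be convex. I would prove it by a midpoint-coupling argument: let $p_-$ and $p_+$ attain the minima at $b-1$ and $b+1$, and build a feasible action at $b$ whose action-value is at most $\tfrac12\big(V_\alpha^{(n+1)}(\delta,d,b-1)+V_\alpha^{(n+1)}(\delta,d,b+1)\big)$. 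When both minimizers are idle the claim reduces directly to convexity of $V_\alpha^{(n)}(\delta+1,d,\cdot)$ in energy; when both are active one uses that each $Q^{(n+1)}(\cdot,p)$ is convex in $b$ for fixed $p$ (a nonnegative combination of the energy-shifted convex functions supplied by the induction hypothesis) together with the affine dependence $b\mapsto b-p$ of the successor energy, so the coupled successors align at the midpoint. I expect the genuine difficulty to be the mixed case in which the two optimizers differ, since then the induced successor distortions $D(p_-)$ and $D(p_+)$ do not coincide and a naive average is not admissible; resolving it requires either a joint-convexity surrogate in $(d,b)$ or a careful case analysis that invokes the monotonicities just established together with the monotone, bi-valued structure of the optimal policy to line up the coupled trajectories.
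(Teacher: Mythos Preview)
Your value-iteration induction for the three monotonicity items is essentially the paper's argument: it too starts from the zero function, applies the Bellman operator, and checks that each application preserves monotonicity in $\delta$, $d$, and $b$ via the same per-action comparison you describe.

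For convexity in $b$ you are right to flag the $b$-dependent action set and the failure of pointwise minima to preserve convexity as the real obstruction. The paper's own proof is actually weaker than your sketch here: it simply asserts that ``the minimizing operation is convexity preserving,'' which is false in general, and supplies no coupling or case analysis at all. Your midpoint-coupling plan is more honest, and the both-idle and both-active cases do go through as you outline (in the latter, each $Q^{(n+1)}(\cdot,p)$ is an affine shift in energy of the assumed-convex $V_\alpha^{(n)}$). The mixed case you worry about---one optimizer idle, the other active---is exactly the gap: the successor distortions then differ, and the induction hypothesis only gives convexity in $b$ for \emph{fixed} $(\delta,d)$, not any joint structure in $(d,b)$. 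Neither your sketch nor the paper closes this; a complete proof would need something extra, such as submodularity of $Q^{(n+1)}$ in $(b,p)$ or the stronger joint-convexity surrogate you allude to, and that ingredient is absent from the paper as well.
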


\begin{proof}
    See \textit{Appendix} \ref{prf:4_vproperty}.
\end{proof}

Next, we show the monotonicity of the optimal transmit power with respect to the energy state.

\begin{theorem}\label{th:2_pproperty}
    For each state $\boldsymbol{s}=(\delta,d,b)$, the optimal transmit power $p$ is non-decreasing with energy state $b$.
\end{theorem}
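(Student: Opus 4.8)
The plan is to establish the monotonicity through a monotone-comparative-statics (Topkis) argument applied to the minimization inside the Bellman equation \eqref{eq:mdp_alpha}. Write $Q_\alpha(\boldsymbol{s},p)=C(\boldsymbol{s},p)+\widebar{V}_\alpha(\boldsymbol{s},p)$ for the bracketed objective, so that the optimal power is $p^{*}(\boldsymbol{s})=\arg\min_{p}Q_\alpha(\boldsymbol{s},p)$. Since the feasible power set $\{0,1,\dots,b\}$ is ascending in $b$, it suffices to prove that, with $\delta$ and $d$ held fixed, the first difference $\Delta(b)=Q_\alpha((\delta,d,b),p+1)-Q_\alpha((\delta,d,b),p)$ is non-increasing in $b$ for every $p\ge1$ (the integer action lattice makes consecutive powers enough). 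This decreasing-differences, i.e.\ submodular, property says that a larger energy budget makes a larger power relatively cheaper, and Topkis' theorem then forces the minimizer $p^{*}$ to be non-decreasing in $b$.

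First I would reduce $\Delta(b+1)-\Delta(b)\le0$ to a statement about $V_\alpha$ alone. By the transition law \eqref{rt:P_st_1}, for $p\ge1$ the immediate cost $1+wD(p)$ is independent of $b$ and cancels from $\Delta(b+1)-\Delta(b)$, while the future-cost terms depend on $b$ only through the post-transmission energy $m=b-p$. Introducing the energy first-difference $g(d,m)=V_\alpha(1,d,m+1)-V_\alpha(1,d,m)$ at the reset states of AoI one, a direct expansion (after dividing by $\alpha>0$) reduces the claim to
\[
\lambda\, g(d_2,m)+(1-\lambda)\, g(d_2,m-1)\;\le\;\lambda\, g(d_1,m+1)+(1-\lambda)\, g(d_1,m),
\]
where $d_1=D(p)>d_2=D(p+1)$ by the monotonicity of $D(\cdot)$ in \eqref{eq:theta_dist}. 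The boundary action $p=0$ (and the edge $p=b$) I would check separately, since there the distortion is not reset and $\Delta(b)$ collapses to a pure-energy comparison.

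Proposition \ref{prop:4_vproperty} supplies the structural input. Convexity of $V_\alpha$ in the energy state makes $g(d,\cdot)$ non-decreasing, which already settles the \emph{same-distortion} version of the displayed inequality: if $d_1=d_2$ both bracketed differences become non-positive and the inequality follows from convexity alone, exactly as in a classical energy-allocation problem. Monotonicity of $V_\alpha$ in $b$ further gives $g\le0$, which I would use to sign the cross terms.

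The hard part is the distortion coupling. The action $p$ does double duty: it both fixes the next distortion $D(p)$ and consumes $p$ units of energy, so the two sides of the displayed inequality evaluate $g$ at the two \emph{different} distortion levels $d_1$ and $d_2$. Closing the gap between $g(d_2,\cdot)$ and $g(d_1,\cdot)$ requires a joint submodularity property of $V_\alpha$ in the pair (distortion, energy), oriented so that the $d$-shift $g(d_2,m)-g(d_1,m)$ stays within the convexity slack $g(d_1,m+1)-g(d_1,m)\ge0$. Because Proposition \ref{prop:4_vproperty} is itself proved by showing that the value-iteration operator $\mathcal{T}$ defined by the right-hand side of \eqref{eq:mdp_alpha} preserves monotonicity and convexity, I would enlarge that induction to carry this additional property: assuming $V_\alpha^{(n)}$ satisfies the four properties of Proposition \ref{prop:4_vproperty} together with the distortion--energy submodularity, I would verify that $\mathcal{T}V_\alpha^{(n)}$ inherits it, and pass to the limit. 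I expect this inductive step to be the main obstacle, since the reset $d\mapsto D(p)$ mixes the two coordinates inside $\mathcal{T}$; the delicate point is that the explicit decreasing-and-convex shape of $D(p)$ in \eqref{eq:theta_dist}, together with the immediate cost $wD(p)$, is what keeps the coupled term within the convexity slack, so the verification must use the concrete form of $D(\cdot)$ rather than the qualitative properties of $V_\alpha$ alone. Once the displayed inequality is secured, $Q_\alpha$ is submodular in $(p,b)$ and Topkis' theorem delivers the theorem.
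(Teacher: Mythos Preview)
Your Topkis route and the paper's contradiction/interchange argument are two presentations of the \emph{same} idea: both reduce Theorem~\ref{th:2_pproperty} to a decreasing-differences (submodularity) inequality for $Q_\alpha(\boldsymbol{s},p)=C(\boldsymbol{s},p)+\widebar V_\alpha(\boldsymbol{s},p)$ in the pair $(p,b)$. The paper's key inequality (A.24),
\[
\widebar V_\alpha(\boldsymbol{s}_1,p_1)-\widebar V_\alpha(\boldsymbol{s}_1,p_2)\ \ge\ \widebar V_\alpha(\boldsymbol{s}_2,p_1)-\widebar V_\alpha(\boldsymbol{s}_2,p_2),
\]
is exactly your statement that $\Delta(b)$ is non-increasing in $b$, specialized to the two actions $p_1>p_2$ and the two energy levels $b_1<b_2$; the remaining steps (A.25)--(A.28) are just the standard interchange that underlies Topkis' theorem. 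So there is no genuine methodological difference.

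Where the two diverge is precisely at the ``hard part'' you isolate. The paper \emph{asserts} that (A.24) follows from the convexity of $V_\alpha$ in $b$ (Proposition~\ref{prop:4_vproperty}), on the grounds that $\widebar V_\alpha(\boldsymbol{s},p)$ is a linear combination of $V_\alpha$-values. But for $p>0$ the post-transition state is $(1,D(p),b-p+E)$, so the two sides of (A.24) evaluate $V_\alpha$ at the two \emph{different} distortion levels $D(p_1)\neq D(p_2)$; convexity in $b$ at a \emph{fixed} $(\delta,d)$ does not, by itself, compare these. In other words, the distortion--energy coupling you flag is a genuine gap in the paper's proof as written, not an artifact of your reformulation. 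Your plan to close it---carry an additional $(d,b)$-submodularity of $V_\alpha$ (equivalently, that the energy-increment $g(d,m)=V_\alpha(1,d,m{+}1)-V_\alpha(1,d,m)$ is monotone in $d$) through the value-iteration induction used for Proposition~\ref{prop:4_vproperty}---is the natural repair; your diagnosis here is sharper than the paper's own argument, even though neither you nor the paper actually completes that inductive step.
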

\begin{proof}
    See \textit{Appendix} \ref{prf:2_pproperty}.
\end{proof}

Moreover, the following theorem shows that the optimal transmit power has a threshold-structure with respect to AoI $\delta$ and distortion $d$.

\begin{theorem}\label{th:3_pproperty}
    Let $p$ be the optimal transmit power for state $\boldsymbol{s}=(\delta,d,b)$.
        For a state $\boldsymbol{s}'=(\delta',d',b)$ having the same energy status $b$ as $\boldsymbol{s}$, $p$ is also optimal if
    \begin{itemize}
      \item $p>0$, $\delta'>\delta$, and $d'=d$;
      \item $p=0$, $\delta'<\delta$, and $d'=d$;
      \item $p>0$, $\delta'=\delta$, and $d'>d$;
      \item $p=0$, $\delta'=\delta$, and $d'<d$.
    \end{itemize}
\end{theorem}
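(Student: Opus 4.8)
The plan is to work entirely with the $\alpha$-discounted Bellman equation \eqref{eq:mdp_alpha} and to organize everything around the action-value function
\[
    Q_\alpha(\boldsymbol{s}, p) = C(\boldsymbol{s}, p) + \widebar{V}_\alpha(\boldsymbol{s}, p),
\]
so that the $\alpha$-optimal power at a state is precisely any minimizer of $Q_\alpha(\boldsymbol{s},\cdot)$ over the feasible set $\{0,1,\cdots,b\}$. Because $\boldsymbol{s}=(\delta,d,b)$ and $\boldsymbol{s}'=(\delta',d',b)$ share the same energy component $b$, this feasible set is identical for both states, so every action feasible at $\boldsymbol{s}$ is feasible at $\boldsymbol{s}'$ and it makes sense to compare $Q_\alpha(\boldsymbol{s},\cdot)$ with $Q_\alpha(\boldsymbol{s}',\cdot)$ action by action.

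First I would isolate the one structural fact that drives all four bullets. For any transmitting action $p>0$, the running cost is $C(\boldsymbol{s},p)=1+wD(p)$, which is free of $\delta$ and $d$; and by the transition law \eqref{rt:P_st_1} the two successor states are $(1,D(p),b+1-p)$ and $(1,D(p),b-p)$, neither of which retains the current AoI or distortion. Hence $\widebar{V}_\alpha(\boldsymbol{s},p)$, and therefore $Q_\alpha(\boldsymbol{s},p)$, is a function of $(b,p)$ alone whenever $p>0$. By contrast, for the idle action we have $Q_\alpha(\boldsymbol{s},0)=\delta+1+wd+\alpha\big(\lambda V_\alpha(\delta+1,d,b+1)+(1-\lambda)V_\alpha(\delta+1,d,b)\big)$; invoking the monotonicity of $V_\alpha$ in $\delta$ and in $d$ from \textit{Proposition} \ref{prop:4_vproperty}, together with the explicit additive terms $\delta$ and $wd$, this quantity is non-decreasing in $\delta$ and non-decreasing in $d$.

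With these two facts the four claims follow by a single comparison template; I sketch the first. Suppose $p>0$ is $\alpha$-optimal at $\boldsymbol{s}$ and take $\boldsymbol{s}'=(\delta',d,b)$ with $\delta'>\delta$. Against any competing $p'>0$, the $\delta$-independence gives $Q_\alpha(\boldsymbol{s}',p)=Q_\alpha(\boldsymbol{s},p)\leq Q_\alpha(\boldsymbol{s},p')=Q_\alpha(\boldsymbol{s}',p')$, while against the idle action the $\delta$-monotonicity of $Q_\alpha(\cdot,0)$ gives $Q_\alpha(\boldsymbol{s}',p)=Q_\alpha(\boldsymbol{s},p)\leq Q_\alpha(\boldsymbol{s},0)\leq Q_\alpha(\boldsymbol{s}',0)$; hence $p$ still minimizes $Q_\alpha(\boldsymbol{s}',\cdot)$. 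The second bullet (idle optimal, $\delta'<\delta$) reverses only the idle inequality, $Q_\alpha(\boldsymbol{s}',0)\leq Q_\alpha(\boldsymbol{s},0)\leq Q_\alpha(\boldsymbol{s},p')=Q_\alpha(\boldsymbol{s}',p')$ for all $p'>0$, so $p=0$ remains optimal. The two distortion bullets are word-for-word the same, substituting the $d$-monotonicity of $Q_\alpha(\cdot,0)$ for its $\delta$-monotonicity.

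The only genuine work lies in the structural fact of the second paragraph, and the main obstacle will be making precise that for $p>0$ the continuation value is truly decoupled from $(\delta,d)$: one must read off from \eqref{rt:P_st_1} that a transmission resets the AoI and distortion to $(1,D(p))$ irrespective of their pre-transmission values, so that the entire forward trajectory, and with it $\widebar{V}_\alpha(\boldsymbol{s},p)$, is insensitive to $\delta$ and $d$. Once this decoupling is granted, each case reduces to a one-line monotonicity comparison resting directly on \textit{Proposition} \ref{prop:4_vproperty}, and no contraction or fixed-point machinery beyond \eqref{eq:mdp_alpha} is required.
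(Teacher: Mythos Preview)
Your proposal is correct and follows essentially the same approach as the paper's own proof: both exploit that for any transmitting action $p>0$ the quantity $C(\boldsymbol{s},p)+\widebar{V}_\alpha(\boldsymbol{s},p)$ depends only on $(b,p)$ (since the transition resets the state to $(1,D(p),\cdot)$), while for the idle action the analogous quantity is monotone in $\delta$ and in $d$ via \textit{Proposition}~\ref{prop:4_vproperty}, and then compare against all competing actions. Your packaging via the action-value function $Q_\alpha$ is somewhat more unified than the paper's case-by-case presentation, but the mathematical content is identical.
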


\begin{proof}
    See \textit{Appendix} \ref{prf:3_pproperty}.
\end{proof}

\textit{Theorem} \ref{th:3_pproperty} indicates that for a given energy state $b$, the optimal transmit power has only two possible values, i.e., $P_k=0$ if $\delta$ and $d$ are smaller than some thresholds or $P_k=p^*>0$ if $\delta$ and $d$ are larger than the thresholds.
    Therefore, the bi-valued and the threshold-type property of the optimal transmit power is very useful in searching the optimal policy $\pi^*$.
    In the sequel, however, we propose a matrix-calculation based algorithm to solve \textit{Problem} \textbf{P}$'_4$, as shown in \textit{Algorithm} \ref{alg:online_policy}.
That is, we present cost functions for all the states by a three-dimensional matrix and update it using matrix calculations, which is more efficient than updating the cost functions for the states one by one.
    In the algorithm, the MATLAB grammar is used, in which $\textbf{M}=\text{zeros}(m,n,l,j)$ returns an $m$-by-$n$-by-$l$-by-$j$ array of zeros and $\min(\textbf{M}, 4)$ is the minimization operation along the fourth dimension.

In particular, \textit{Theorem} \ref{th:mdp_alpha} guarantees that the results shown in \textit{Proposition} \ref{prop:4_vproperty}, \textit{Theorem} \ref{th:2_pproperty}, and \textit{Theorem} \ref{th:3_pproperty} also hold true for the expected total discounted cost problem, i.e., \textit{Problem} \textbf{P}$_4$.

\begin{algorithm}[!t]
\algsetup{linenosize=\scriptsize}
\scriptsize
\caption{Online policy searching}
\begin{algorithmic}[1]\label{alg:online_policy}
\REQUIRE ~~\\
    \STATE Set $\Delta v=+\infty$, $\varepsilon=10^{-3}$;
    \STATE Initialize cost function matrix $\textbf{V}_\alpha=\text{zeros}(\delta_{\text{max}}, d_{\text{max}}, b_{\text{max}}+1) $;
    \STATE Initialize power matrix $\textbf{P}_\alpha=\text{zeros}(\delta_{\text{max}}, d_{\text{max}}, b_{\text{max}}+1) $;
\ENSURE ~~\\
\WHILE{$\Delta v>\varepsilon$}
    \STATE $\textbf{V}_\alpha^{\text{old}} = \textbf{V}_\alpha$, $\textbf{V}_\alpha^{p} =\text{zeros}(\delta_{\text{max}}, d_{\text{max}}, b_{\text{max}}+1) $;
     \STATE $\textbf{V}_\alpha^{\forall p}=\text{zeros}(\delta_{\text{max}}, d_{\text{max}}, b_{\text{max}}+1, b_{\text{max}}+1) $;
    \FOR {$p=0$ to $b_{\text{max}}$}
        \STATE $\textbf{V}_\alpha^{p}= C(\boldsymbol{s}, p) + \alpha \sum_{\boldsymbol{t}\in\boldsymbol{T}_{\boldsymbol{s},p}}
             \textsf{P}_{\boldsymbol{st}}(p) V_\alpha(\boldsymbol{t})$
             and set element ${V}_\alpha(\boldsymbol{s})=+\infty$ if $p>b$ is true for state $\boldsymbol{s}$;
        \STATE $\textbf{V}_\alpha^{\forall p}(:,:,:,p+1) = \textbf{V}_\alpha^p$;
    \ENDFOR
    \STATE $[\textbf{V}_\alpha, \textbf{P}_\alpha] = \min\big(\textbf{V}_\alpha^{\forall p},4\big)$;
    \STATE $\Delta v=\max\big(\max(\max|\textbf{V}_\alpha-\textbf{V}_\alpha^{\text{old}}|)\big)$
\ENDWHILE
 \STATE \textbf{Output:}  $\textbf{V}_\alpha, \textbf{P}_\alpha$.
\end{algorithmic}
\end{algorithm}

%
%
%

\begin{figure}[!t]
\centering
\includegraphics[width=3.7in]{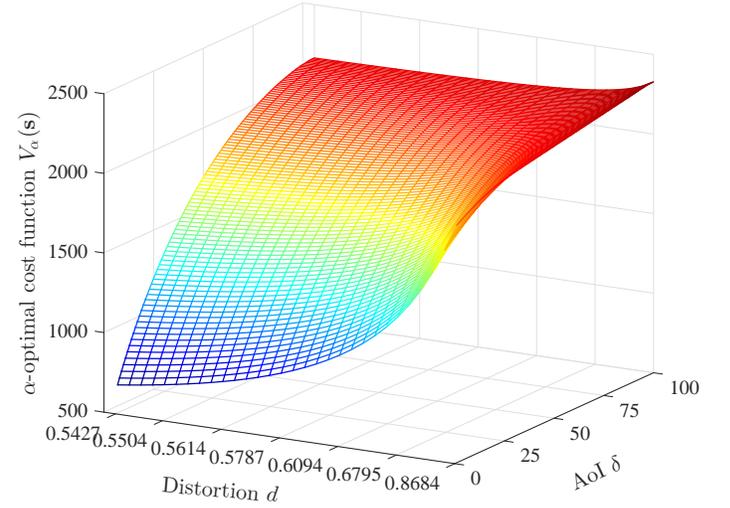}
\caption{$\alpha$-optimal cost function $V_\alpha(\boldsymbol{s})$ versus AoI and distortion ($b=0$).  } \label{fig:v_a_d_1}
\end{figure}

\begin{figure}[!t]
\centering
\includegraphics[width=3.7in]{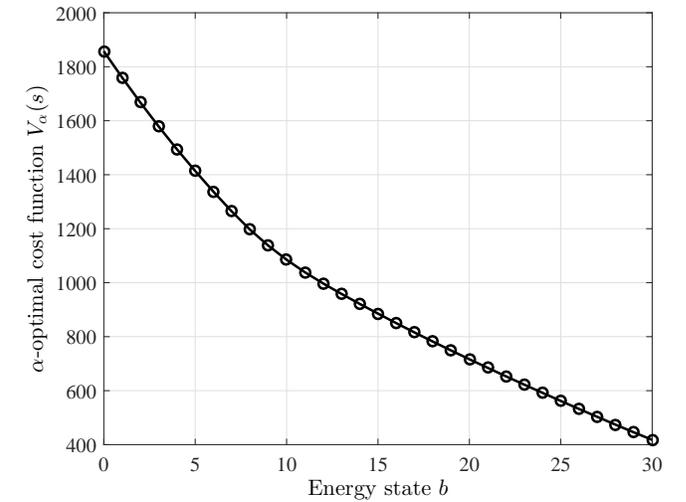}
\caption{$\alpha$-optimal cost function $V_\alpha(\boldsymbol{s})$ versus energy state $b$, in which $(\delta,d)=(40,0.6094)$.  } \label{fig:v_b}
\end{figure}

\begin{figure}[!t]
\centering
\includegraphics[width=3.7in]{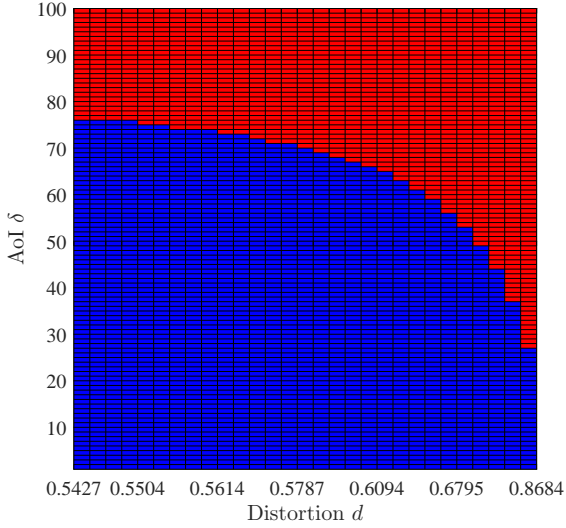}
\caption{Optimal transmit power versus AoI and distortion ($b=3$).  } \label{fig:p_a_d_1}
\end{figure}

\section{Numerical Results}\label{sec:6_simulation}
In this section, we present the developed results above through numerical and Monte Carlo simulations.
    Without loss of generality, we set the source signal power to $\sigma^2_\theta=1$, the observation noise power to $\sigma^2_{\text{ob}} = 0.5$, the channel noise power to $\sigma^2_{\text{ch}} = 2.8$.
Under this setting, we have $w_0=12.8929$ and $\sigma^2_{\text{ob}0} = 0.6777$.
    For the online power control, we set the discount factor to $\alpha=0.999$.

\subsection{Performance of Online and Offline Power Control}
First, we set $w=200$, $\delta_{\text{max}}=100$, $b_{\text{max}}=30$, and run \textit{Algorithm} \ref{alg:online_policy} to investigate the monotonicity of the $\alpha$-optimal cost function $V_\alpha(\boldsymbol{s})$ and the threshold property of the optimal transmit power.
    In Fig. \ref{fig:v_a_d_1}, the energy state is set as $b=0$ and it is observed that $V_\alpha(\boldsymbol{s})$ is increasing both with AoI and distortion (cf. \textit{Proposition} \ref{prop:4_vproperty}.1 and 3.2).
It should be noted that the monotonicity of $V_\alpha(\boldsymbol{s})$ also hold for any $b>0$.
It is further observed in Fig. \ref{fig:v_b} that $V_\alpha(\boldsymbol{s})$ is convex and decreasing with respect to energy state $b$, (cf. \textit{Proposition} \ref{prop:4_vproperty}.3 and 3. 4).
    Fig. \ref{fig:p_a_d_1} presents the optimal transmit power for each states, in which the energy state is set to $b=3$.
It is seen that the optimal transmit power is $P=3$ only if the AoI $\delta$ and the distortion $d$ are large, i.e.,
     the optimal transmit power have a threshold-type property with respect to $\delta$ and $d$ (cf. \textit{Theorem} \ref{th:3_pproperty}).
In particular, the optimal transmit power is dominated by $\delta$, since $\delta$ is much larger than $d$.
    Moreover, for some given energy states, we show the boundaries for transmit power to be positive, as shown in Fig. \ref{fig:p_a_d_2}.
It is observed that when $b$ is increased, we have a lower boundary, which means that the sensor is more likely to perform a block of observation and transmission.
    Furthermore, for the given AoI-distortion pair $(\delta,d)=(40,0.6094)$, the left figure in Fig. \ref{fig:p_a_d_2} shows that the optimal transmit power is non-decreasing with energy state $b$.

\begin{figure}[!t]
\centering
\includegraphics[width=3.7in]{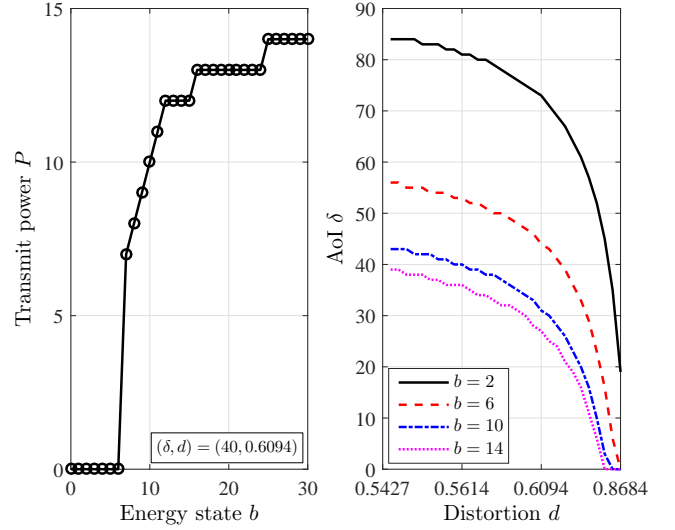}
\caption{Optimal transmit power and boundaries for it to be positive.  } \label{fig:p_a_d_2}
\end{figure}

In Fig. \ref{fig:age_distortion_trade_4_cases}, we plot the trade-off between the average AoI and the average distortion for each weighting coefficient $w\in\{w_0:25:500\}$ and for all the four schemes under test.
    For the online policy, we consider a period of $K=10^5$ blocks and solve the optimal power control  for each $w$ using \textit{Algorithm} \ref{alg:online_policy}.
    It is observed that the performance (the dotted curve) of the online power control closely approaches that (the solid curve) of the save-and-transmit policy, i.e., the performance limit of the system.
In the offline power control, we consider a period of $K=100$ blocks due to the computational complexity of \textit{Algorithm} \ref{alg:genetic}.
    For each $w$, the genetic algorithm is run for $N_{\text{iter}}=200$ iterations, in which $N_{\text{pop}}=100$ chromosomes are considered.
We set the selection rate to $q_{\text{sel}}=0.5$ and choose a half of the chromosomes as potential parent chromosomes.
    The crossover depth is $D_{\text{cross}}=34$, i.e., each chromosome would be randomly selected and changed with $34$ randomly chosen gene-instances in each iteration (to generate a child chromosome).
To evaluate the fitness of a chromosomes using \eqref{df:fitness}, we first identify the block allocation $\{X_l\}$ from the chromosome and then solve the corresponding optimal power allocation $\{P_l\}$ using \textit{Algorithm} \ref{alg:wb_wf}.
    It is observed that the offline power control (desh-dotted curve) does not perform as well as other schemes.
The reason is that the considered period is a bit too short for the scheme to approach the performance limit.
    Also, the genetic algorithm is not guaranteed to find the optimal solution in a finite number (e.g., $N_{\text{iter}}=200$) of iterations with finite number of chromosomes (e.g., $N_{\text{pop}}=100$).

\begin{figure}[!t]
\centering
\includegraphics[width=3.7in]{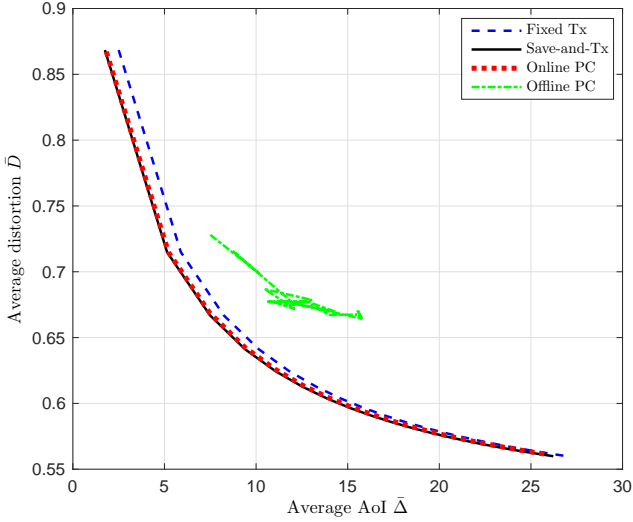}
\caption{AoI-distortion trade-off.  } \label{fig:age_distortion_trade_4_cases}
\end{figure}

\subsection{Timeliness and Distortion in Fading Sensing Systems}\label{subsec:6_B}
In this subsection, we consider a sensing system with block Rayleigh fading.
    That is, the power gain $\rho$ of the channel does not change within each block and varies independently among blocks with probability density function $f_\rho(x)=\exp(-x/\sigma^2_{\text{fd}})/\sigma^2_{\text{fd}}$, in which we have $x>0$ and $0<\sigma^2_{\text{fd}}<1$.
Note that the randomness in the channel gain does not change the expression of average AoI.
    For the fixed power transmission and the save-and-transmit policy, therefore, we have
\begin{align}
    \widebar{\Delta}_{\text{fx}} = \frac {P_{\text{fx}}+1}{2\lambda}~\text{and}~
    {\Delta}_{\text{sv}} = \frac {P_{\text{sv}}+\lambda}{2\lambda}.
\end{align}

Moreover, the expected distortion would be
\begin{align}
    &\mathbb{E}[D_{\text{fx}}^{\text{fading}}]=\mathbb{E}[D_{\text{sv}}^{\text{fading}}]=
    \sigma^2_{\text{ob}} +  \mathbb{E}\left[\frac{(\sigma^2_\theta-\sigma^2_{\text{ob}})\sigma^2_{\text{ch}}}
                                        {\sigma^2_{\text{ch}}+\rho P}\right]\\
    &=\sigma^2_{\text{ob}} + (\sigma^2_\theta-\sigma^2_{\text{ob}}) z e^z \text{E}_1(z),
\end{align}
where $z=\frac{\sigma^2_{\text{ch}}}{\sigma^2_{\text{fd}} P}$ and $\text{E}_1(z)=\int_z^\infty e^{-u}/u du$ is the first order exponential integral.

Therefore, \textit{Problem} \textbf{P}$_0$ turns to be
\begin{align}
\label{prob:5}
  (\textbf{P}_5)~~~
    \min\limits_{P}~~~&\frac{P+1}{2\lambda} + w\sigma^2_{\text{ob}} +w(\sigma^2_\theta-\sigma^2_{\text{ob}}) z
            e^z \text{E}_1(z)  \\
 \label{prob:1_1}
        \text{subject~to}~~~ & P\geq 1.
\end{align}

By taking the derivative of the objective function with respect to $P$, we have
\begin{equation}
    \frac{\partial \mathcal{J}}{\partial P} = \frac{1}{2\lambda} -
            \frac{w(\sigma^2_\theta-\sigma^2_{\text{ob}})}{ P}
             \Big((z^2+z)e^z\text{E}_1(z) - z \Big).
\end{equation}

By setting the derivative to zero, we  have
\begin{equation}\label{dr:fading_P}
    P = 2\lambda w(\sigma^2_\theta-\sigma^2_{\text{ob}}) \Big((z^2+z)e^z \text{E}_1(z) - z \Big).
\end{equation}

Thus, we can solve the optimal transmit power by applying \eqref{dr:fading_P} to any initial power (e.g., $P^{(0)}=1$ and $z^{(0)}={\sigma^2_{\text{ch}}}/{\sigma^2_{\text{fd}}}$) iteratively, which does not stop until the process converges.

\begin{figure}[!t]
\centering
\includegraphics[width=3.7in]{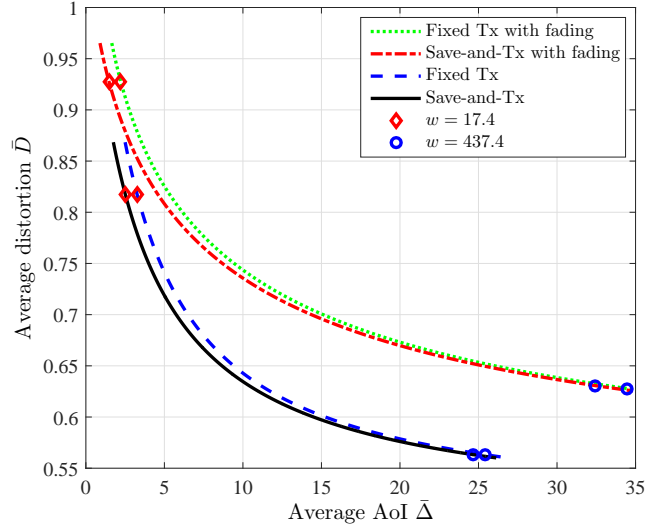}
\caption{AoI-distortion trade-off with block Rayleigh fading.  } \label{fig:age_distortion_trade_fading}
\end{figure}

We set the expected fading channel power gain to $\mathbb{E}[\rho]=\sigma^2_{\text{fd}}=0.7$ and plots the AoI-distortion trade-off of the fading sensing system in Fig. \ref{fig:age_distortion_trade_fading}.
    When a small weighting coefficient (e.g., $w=17.4$) is used, it is observed that the average AoI of the fading system is smaller than that of the non-fading system.
The reason is that compared with the non-fading system, using a larger transmit power yields less reduction in average distortion in fading systems due to the randomness of channel gains.
    When $w$ is relatively small, the optimizer would be more concentrated on the average AoI of the system.
On the contrary, we have to accumulate more energy to reduce the average distortion if $w$ is relatively large (e.g., $w=437.4$).
    However, the fading system is not as efficient as the non-fading system, and thus have a larger average AoI and a larger average distortion.
As is expected, Fig. \ref{fig:J_fading} shows that the fading system is inferior to the non-fading system in terms of achievable weighted-sum AoI and distortion, especially when $w$ is relatively large.

\begin{figure}[!t]
\centering
\includegraphics[width=3.7in]{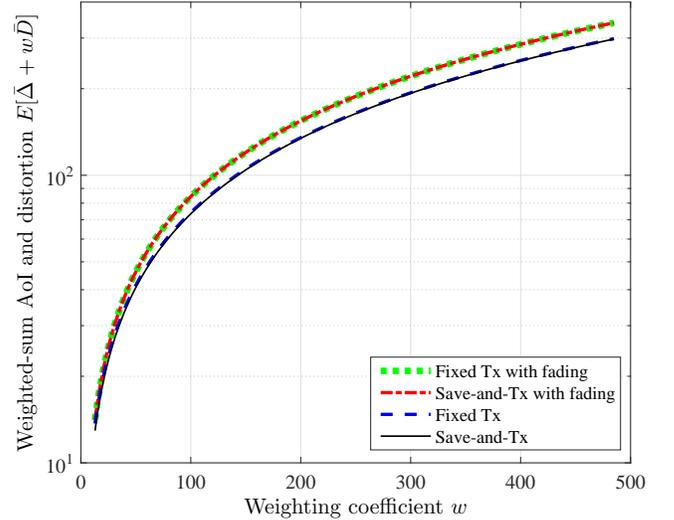}
\caption{Weighted-sum AoI and distortion with block Rayleigh fading ($\mu=0.7$).  } \label{fig:J_fading}
\end{figure}

\section{Conclusion}\label{sec:5_conclusion}
In this paper, we have investigated the timeliness-distortion trade-off of an energy harvesting powered sensing system.
    From the point view of  Shannon's information theory, seeking the limits of communication efficiency/reliability and designing limit-approaching schemes have been the core field of research for the communication society.
In recent years, however, the role of communications is changing from a relatively independent research to an indispensable support for human sensing, which originates from human beings' curiosity on unknowns.
    We sense the world by seeing and feeling the ambient environments, by accessing texts, pictures, videos from books and internet, and also passively by the suggestions from friends and recommending systems, which may collaboratively be referred to as the ubiquitous sensing.
During the sensing process, we may no longer need to deep into the performance of communications and computations.
    What interests us would be the timeliness, the accuracy, and the credibility of the sensing.
In this paper, therefore, we have focused on the timeliness and the distortion of IoT systems.
    By minimizing the average weighted-sum AoI and distortion, we presented optimal solutions for several observation and transmission schemes.
As the future work, we are interested in evaluating the performance limits and designing optimal schemes for systems with more intelligent sensing, e.g., a system including active sensing from deliberately deployed sensors and information search engines, as well as passive sensing from recommending systems.

\appendix

\renewcommand{\theequation}{\thesection.\arabic{equation}}
\newcounter{mytempthcnt}
\setcounter{mytempthcnt}{\value{theorem}}
\setcounter{theorem}{2}
\addcontentsline{toc}{section}{Appendices}\markboth{APPENDICES}{}

\subsection{Proof of \textit{Proposition} \ref{prop:fixed}}  \label{prf:fixed}
\begin{proof}
Since the sensor harvests one unit of energy with probability $\lambda$ in each block, i.e., following the Bernoulli process, it is clear that $\tau_{\text{H}k}$ follows a negative binomial distribution with parameters $P_{\text{fx}}$ and $\lambda$.
    In particular, we have $\Pr\{ \tau_{\text{H}l}=j \}= { {P_{\text{fx}}-1}\choose{j-1} } \lambda^{P_{\text{fx}}} (1-\lambda)^{j-P_{\text{fx}}}$ for $j=P_{\text{fx}},P_{\text{fx}}+1,\cdots$.
Moreover, the first and the second order moments of $\tau_{\text{H}l}$ are given by
\begin{align}\label{apx:a_1}
    \mathbb{E}[\tau_{\text{H}l}] = \frac{P_{\text{fx}}}{\lambda} ~\text{and}~
    \mathbb{E}[\tau_{\text{H}l}^2] = \frac{P_{\text{fx}}}{\lambda^2}(P_{\text{fx}}+1-\lambda).
\end{align}

As $K$ goes to infinity, we see from Fig. \ref{fig:aoi} that the average AoI can be calculated by
\begin{align}
    \widebar\Delta_{\text{fx}} = \frac1K \sum_{k=1}^K \Delta_k = \frac LK  \frac1L\sum_{l=1}^L Q_l
        = \frac{1}{\mathbb{E}[Y_l]} \mathbb{E}[Q_l],
\end{align}
where
\begin{align}
    Q_l &=\frac12 Y_l(Y_l+1)] = \frac{Y_l}2+
                \frac{Y_l^2}2
\end{align}
is the  area under the AoI curve during inter-departure time $Y_k$.
    It is also observed from Fig. \ref{fig:aoi} that for each busy block, we have $X_k= Y_k=\tau_{\text{H}k}$.
Thus, the average AoI would be
\begin{equation}\label{apx:delta_fix}
    \widebar\Delta_{\text{fx}} =\frac12+
                \frac{\mathbb{E}[\tau_{\text{H}k}^2]}{2\mathbb{E}[\tau_{\text{H}k}]} =  \frac{P_{\text{fx}}+1}{2\lambda}.
\end{equation}

By combing \eqref{eq:theta_dist} and\eqref{eq:delta_fix}, we have
\begin{align}
\mathcal{J} &= \frac1K \sum_{k=1}^{K} (\Delta_k+wD_k) \\
                      \label{eq:dis_fix_gammaov_small}
                      & = \frac{P_{\text{fx}}+1}{2\lambda} + w\sigma^2_{\text{ob}} +  \frac{w(\sigma^2_\theta-\sigma^2_{\text{ob}})\sigma^2_{\text{ch}}}
                                        {\sigma^2_{\text{ch}}+P_{\text{fx}}}.
\end{align}

The derivative of $\mathcal{J} $ with respect to $P_{\text{fx}}$ is given by
\begin{equation}\label{apx:J_aoi_fx}
    \frac{\partial \mathcal{J}}{\partial P_{\text{fx}}} = \frac{1}{2\lambda} - \frac{w(\sigma^2_\theta-\sigma^2_{\text{ob}})\sigma^2_{\text{ch}}}
                                        {(\sigma^2_{\text{ch}}+P_{\text{fx}})^2}.
\end{equation}

First, we observe that  $\frac{\partial \mathcal{J}}{\partial P_{\text{fx}}}$ is positive for any $P_{\text{fx}}\geq1$ and any $\sigma^2_{\text{ob}}$ if the condition $w\leq w_0=(1+\sigma^2_{\text{ch}})^2 /(2\lambda\sigma^2_\theta\sigma^2_{\text{ch}})$ is satisfied.
    In the case $w> w_0$, $\frac{\partial \mathcal{J}}{\partial P_{\text{fx}}}$ also is positive for any $P_{\text{fx}}\geq1$ if $\sigma^2_{\text{ob}}\geq \sigma^2_{\text{ob}0} =\sigma^2_\theta(1-w_0/w) $.
For these cases, therefore, the optimal transmit power would be $P_{\text{fx}}=1$.

Second, if $w> w_0$ and $\sigma^2_{\text{ob}}< \sigma^2_{\text{ob}0}$ are satisfied, the optimal transmit power should be the solution to $\frac{\partial \mathcal{J}}{\partial P_{\text{fx}}}=0$, which leads to
    \begin{equation}
        P_{\text{fx}} = \sqrt{2\lambda w(\sigma^2_\theta-\sigma^2_{\text{ob}})\sigma^2_{\text{ch}}} - \sigma^2_{\text{ch}}.
    \end{equation}
    This completes the proof of \textit{Proposition} \ref{prop:fixed}.
\end{proof}

\subsection{Proof of \textit{Proposition} \ref{prop:3_off p}}  \label{prf:3_off p}
\begin{proof}
    It is clear that \textit{Problem} \textbf{P}$'_3$ is convex in each $P_l$.
The corresponding Lagrangian can be expressed as
\begin{align}
    \mathcal{L} = \frac1{K} \sum_{l=1}^L  \frac{Y_{l}}
                                        {\sigma^2_{\text{ch}}+P_{l-1}}
                        + \sum _{l=1}^L \mu_l \left(\sum_{i=1}^l P_i - \sum_{i=1}^{l}\sum_{j=1}^{X_{l}} E_{ij} \right)
\end{align}
Taking the derivative with respect to $P_l$ and set it to zero, we have
\begin{equation} \label{apx:pl_deriv}
    \frac1{K}  \frac{-Y_{l+1}}
                                        {(\sigma^2_{\text{ch}}+P_{l})^2} + \nu_l =0,
\end{equation}
where $\nu_l=\sum _{j=l}^L \mu_j$ is the water-level for the $l$-th busy block.
Note that $-\nu_l$ is equal to the first order derivative of the objective function in \textit{Problem} \textbf{P}$'_3$ .
    Thus, the objective function would be minimized when the water-levels are as close to each other as possible under constraint \eqref{prob:3_2_1}, i.e., the marginal gain of increasing each $P_l$ is almost the same.

By solving $P_l$ from \eqref{apx:pl_deriv}, the proof of \textit{Proposition} \ref{prop:3_off p} would be completed.

\end{proof}

\subsection{Proof of \textit{Proposition} \ref{prop:4_vproperty}}  \label{prf:4_vproperty}
\begin{proof}
\subsubsection{$V_\alpha(\boldsymbol{s})$  is non-decreasing with $b$}
    Note that $V_\alpha(\boldsymbol{s})$ is a mapping from state space $\mathcal{S}$ to the real space.
        We then define a functional $T_f$ in the following manner \cite[Chap.~6.2,~(11)]{Ross-MDP-1970}.
        \begin{equation}\label{df:tf}
            (T_\alpha u)(\boldsymbol{s}) = \min\limits_{p} \left\{C(\boldsymbol{s}, p) + \alpha \sum_{\boldsymbol{t} \in\boldsymbol{T}_{\boldsymbol{s},p}} \textsf{P}_{\boldsymbol{st}}(p) u(\boldsymbol{t}) \right\}.
        \end{equation}
    That is, for a bounded real-valued function $u$, $T_\alpha u$ is the function whose value at state $\boldsymbol{s}$ is given by \eqref{df:tf}.
        As shown in \cite[Chap.~6.2,~\textit{Theorem}~6.5]{Ross-MDP-1970}, $T_\alpha u$ is a contraction mapping.
    According to \cite[Chap.~6.2,~\textit{Lemma}~6.2]{Ross-MDP-1970}, therefore, we have $\lim_{n\rightarrow\infty} T_\alpha^n u\rightarrow V_\alpha$.
        That is, the function $V_\alpha$ can be obtained by successively applying $T_\alpha$ to any initial bounded real-valued function $u$.
    Thus, we shall start from the function $u(\boldsymbol{s})=0$ and prove the result using mathematical induction.

    First, it is observed that $u(\boldsymbol{s})=0$ is non-decreasing with $\delta$ for any state $\boldsymbol{s}\in\mathcal{S}$.

    Second, after applying $T_\alpha$ to $u$ once, we have
    \begin{equation}
        T_\alpha^1 u(\boldsymbol{s}) = \min\limits_{p} C(\boldsymbol{s}, p).
    \end{equation}
    For two states $\boldsymbol{s}_1={(\delta,d,b_1)}$ and $\boldsymbol{s}_2={(\delta,d,b_2)}$ in which $b_1<b_2$, it is clear that
    \begin{align}
        &T_\alpha^1 u(\boldsymbol{s}_1)= \min\limits_{p\in\{0,1,\cdots,b_1\}} C(\boldsymbol{s}_1, p) \\
            & = \min\{\delta+1+wd, 1+wD(b_1)\}\\
            & \geq \min\{\delta+1+wd, 1+wD(b_2)\} \\
            & = \min\limits_{p\in\{0,1,\cdots,b_2\}} C(\boldsymbol{s}_2, p)\\
            &= T_\alpha^1 u(\boldsymbol{s}_2).
    \end{align}
    That is, $T_\alpha^1 u(\boldsymbol{s})$ is non-increasing with $b$.

    Third, we assume that $T_\alpha^n u(\boldsymbol{s})$ is non-increasing with $b$, i.e., $T_\alpha^n u(\boldsymbol{s}_1) \geq T_\alpha^n u(\boldsymbol{s}_2) $.
        By applying $T_\alpha$ to $u$ once more, it is clear that
    \begin{align}
        & \min\limits_{p\in\{0,1,\cdots,b_1\}} \left\{ C(\boldsymbol{s}_1, p) +
                        \alpha \sum_{\boldsymbol{t} \in\boldsymbol{T}_{\boldsymbol{s}_1,p}} \textsf{P}_{\boldsymbol{st}}(p) T_\alpha^n u(\boldsymbol{t}) \right\}\\
\geq&  \min\limits_{p\in\{0,1,\cdots,b_2\}} \left\{ C(\boldsymbol{s}_2, p) +
                        \alpha \sum_{\boldsymbol{t} \in\boldsymbol{T}_{\boldsymbol{s}_2,p}} \textsf{P}_{\boldsymbol{st}}(p) T_\alpha^n u(\boldsymbol{t}) \right\}.
    \end{align}
    That is, $T_\alpha^{n+1} u(\boldsymbol{s}_1) \geq T_\alpha^{n+1} u(\boldsymbol{s}_2)$ and $T_\alpha^{n+1} u(\boldsymbol{s})$ is non-increasing with $b$.

    Therefore, we see that $V_\alpha(\boldsymbol{s})=\lim_{n\rightarrow\infty} T_\alpha^{n} u(\boldsymbol{s}) $ is non-increasing with $b$.

    Likewise, it can be readily proved that $V_\alpha(\boldsymbol{s})$ is non-decreasing with $\delta$ and $d$.

    \subsubsection{$V_\alpha(\boldsymbol{s})$  is convex with $b$}
    First, by applying $T_\alpha$ to $u=0$, we have
    \begin{align}
        &T_\alpha^1 u(\boldsymbol{s}) = \min\limits_p C(\boldsymbol{s}, p) \\
        &~= \min\{\delta+1+wd, 1+wD(b_1) \},
    \end{align}
    which is convex in $b$.

    Second, we assume that $T_\alpha^n u(\boldsymbol{s})$ is convex with $b$.
        Note that $T_\alpha^{n+1} u(\boldsymbol{s})$ can be expressed as
    \begin{equation}
         \min\limits_{p\in\{0,1,\cdots,b\}} \left\{ C(\boldsymbol{s}, p) +
                        \alpha \sum_{\boldsymbol{t} \in\boldsymbol{T}_{\boldsymbol{s},p}} \textsf{P}_{\boldsymbol{st}}(p) T_\alpha^n u(\boldsymbol{t}) \right\}.
    \end{equation}
    Note also that $C(\boldsymbol{s}, p)$ is convex in $b$ for each given $p$, $T_\alpha^n u(\boldsymbol{s})$ is convex with $b$ as assumed, and the minimizing operation is convexity preserving.
        Thus, $T_\alpha^{n+1} u(\boldsymbol{s})$ is also convex with $b$.

    Finally, we have $V_\alpha(\boldsymbol{s})$ is convex in $b$ since $V_\alpha(\boldsymbol{s})=\lim_{n\rightarrow\infty} T_\alpha^{n} u(\boldsymbol{s}) $.
        This completes the proof of \textit{Proposition} \ref{prop:4_vproperty}.
\end{proof}

\subsection{Proof of \textit{Theorem} \ref{th:2_pproperty}} \label{prf:2_pproperty}
\begin{proof}
    Let $\boldsymbol{s}_1={(\delta,d,b_1)}$ and $\boldsymbol{s}_2={(\delta,d,b_2)}$ be two states in which $b_1<b_2$.
        Denote the $\alpha$-optimal transmit power for the two states as $p_1$ and $p_2$, respectively.
    To prove the theorem, we need show that $p_1\leq p_2$.

    We assume that $p_1> p_2$ and shall prove the result via contradiction.
        Since we have $b_1<b_2$, $p_1$ would also be a feasible transmit power for state $\boldsymbol{s}_2$.
         It has been shown in \textit{Proposition} \ref{prop:4_vproperty} that $V_\alpha(\boldsymbol{s})$ is convex and non-increasing with energy state $b$.
    Since $V_\alpha(\boldsymbol{s})$ is strictly positive for each state, $V_\alpha(\boldsymbol{s})$ must be decreasing more and more slowly as $b$ is increased, i.e., $V_\alpha(\delta, d, b_1-p)-V_\alpha(\delta, d, b_1)\geq V_\alpha(\delta, d, b_2-p)-V_\alpha(\delta, d, b_2)$ hold true for each transmit power $p\leq\min(b_1,b_2)$ (which is the reduction in energy).
        Since $\widebar{V}_\alpha(\boldsymbol{s},p)$ (cf. \eqref{df:future_exp_V}) is linear combination of $V_\alpha(\boldsymbol{s})$, we also have
    \begin{equation}\label{apx:contra_1}
        \widebar{V}_\alpha(\boldsymbol{s}_1, p_1) - \widebar{V}_\alpha(\boldsymbol{s}_1, p_2) \geq \widebar{V}_\alpha(\boldsymbol{s}_2, p_1) - \widebar{V}_\alpha(\boldsymbol{s}_2, p_2).
    \end{equation}

    On the other hand, since $p_1$ is optimal for $\boldsymbol{s}_1$, we have
    \begin{align}
         C(\boldsymbol{s}_1, p_1) + \widebar{V}_\alpha(\boldsymbol{s}_1, p_1)
                \leq C(\boldsymbol{s}_1, p_2) + \widebar{V}_\alpha(\boldsymbol{s}_1, p_2),
    \end{align}
    which is equivalent to
    \begin{equation} \label{apx:contra_21}
        C(\boldsymbol{s}_1, p_2) - C(\boldsymbol{s}_1, p_1) \geq \widebar{V}_\alpha(\boldsymbol{s}_1, p_1) -\widebar{V}_\alpha(\boldsymbol{s}_1, p_2)
    \end{equation}

    Likewise, we have
    \begin{equation}\label{apx:contra_22}
        C(\boldsymbol{s}_2, p_2) - C(\boldsymbol{s}_2, p_1) \leq \widebar{V}_\alpha(\boldsymbol{s}_2, p_1) -\widebar{V}_\alpha(\boldsymbol{s}_2, p_2)
    \end{equation}
    since $p_2$ is optimal for $\boldsymbol{s}_2$.

    Moreover, we note that for each $p$, we have $C(\boldsymbol{s}_1, p)=C(\boldsymbol{s}_2, p)$ since $\boldsymbol{s}_1$ is different from $\boldsymbol{s}_2$ in $b$ while $C(\boldsymbol{s}, p)$ is independent from $b$.
        Thus, we have
    \begin{equation}\label{apx:contra_23}
        C(\boldsymbol{s}_1, p_2) - C(\boldsymbol{s}_1, p_1) = C(\boldsymbol{s}_2, p_2) - C(\boldsymbol{s}_2, p_1).
    \end{equation}

    By combing \eqref{apx:contra_21}, \eqref{apx:contra_22}, and \eqref{apx:contra_23}, we have
    \begin{equation}\label{apx:contra_24}
        \widebar{V}_\alpha(\boldsymbol{s}_1, p_1) -\widebar{V}_\alpha(\boldsymbol{s}_1, p_2)
        \leq \widebar{V}_\alpha(\boldsymbol{s}_2, p_1) -\widebar{V}_\alpha(\boldsymbol{s}_2, p_2),
    \end{equation}
    which is contradict with \eqref{apx:contra_1}.

    Therefore, the assumption $p_1>p_2$ cannot be true and we have $p_1\leq p_2$, which completes the proof of the theorem.
\end{proof}

\subsection{Proof of \textit{Theorem} \ref{th:3_pproperty}} \label{prf:3_pproperty}
\begin{proof}
\subsubsection{The case of $p>0$, $\delta'>\delta$, and $d'=d$}
Since $p$ is optimal for $\boldsymbol{s}$,  for any $q \neq p$, we have
\begin{align}
                    C(\boldsymbol{s}, p) + \widebar{V}_\alpha(\boldsymbol{s}, p)
     \leq C(\boldsymbol{s}, q) + \widebar{V}_\alpha(\boldsymbol{s}, q). 
\end{align}

For any transmit power $q>0$, the AoI will return to one after the transition, regardless the current AoI.
    Thus, we have $C(\boldsymbol{s}, q)=1+wD(q)=C(\boldsymbol{s}', q)$ and $\widebar{V}_\alpha(\boldsymbol{s}, q)=\widebar{V}_\alpha(\boldsymbol{s}', q)$, which indicates
\begin{align}\label{apx:delta_thrd_1}
{V}_\alpha(\boldsymbol{s}')|_ p &=C(\boldsymbol{s}', p) + \widebar{V}_\alpha(\boldsymbol{s}', p)\\
    &=C(\boldsymbol{s}, p) + \widebar{V}_\alpha(\boldsymbol{s}, p) \\
    & \leq C(\boldsymbol{s}, q) + \widebar{V}_\alpha(\boldsymbol{s}, q) \\
    \label{apx:delta_thrd_11}
    & = C(\boldsymbol{s}', q) + \widebar{V}_\alpha(\boldsymbol{s}', q) = {V}_\alpha(\boldsymbol{s}')|_q.
\end{align}

For the case $q=0$, we have $C(\boldsymbol{s}, 0)=\delta+1+wd\leq \delta'+1+wd=C(\boldsymbol{s}', q)$.
    We also have $\widebar{V}_\alpha(\boldsymbol{s}, 0)\leq\widebar{V}_\alpha(\boldsymbol{s}', 0)$ since $\widebar{V}_\alpha(\boldsymbol{s}, p)$ is a linear combination of functions ${V}_\alpha(\boldsymbol{s})$ while  ${V}_\alpha(\boldsymbol{s})$ is non-decreasing with $\delta$.
Hence, we have,
\begin{align}\label{apx:delta_thrd_2}
{V}_\alpha(\boldsymbol{s}')|_ p &=C(\boldsymbol{s}', p) + \widebar{V}_\alpha(\boldsymbol{s}', p)\\
    &=C(\boldsymbol{s}, p) + \widebar{V}_\alpha(\boldsymbol{s}, p) \\
    & \leq C(\boldsymbol{s}, 0) + \widebar{V}_\alpha(\boldsymbol{s}, 0) \\
    \label{apx:delta_thrd_22}
    & \leq C(\boldsymbol{s}', 0) + \widebar{V}_\alpha(\boldsymbol{s}', 0) = {V}_\alpha(\boldsymbol{s})|_0.
\end{align}

By combining \eqref{apx:delta_thrd_1}, \eqref{apx:delta_thrd_11}, \eqref{apx:delta_thrd_2}, and \eqref{apx:delta_thrd_22}, it is clear that ${V}_\alpha(\boldsymbol{s}')|_ p\leq{V}_\alpha(\boldsymbol{s}')|_q$ for all $q\geq0$, i.e., $p$ is optimal for state $\boldsymbol{s}'$.

\subsubsection{The case of $p=0$, $\delta'<\delta$, and $d'=d$}
Since $p=0$ is optimal for state $\boldsymbol{s}$, we  have
\begin{align}
    \label{apx:delta_thrd_30}
                    V_\alpha(\boldsymbol{s}) &= \delta+1 + wd + \widebar{V}_\alpha(\boldsymbol{s}, 0) \\
    \label{apx:delta_thrd_300}
     &\leq C(\boldsymbol{s}, q) + \widebar{V}_\alpha(\boldsymbol{s}, q) = {V}_\alpha(\boldsymbol{s})|_ q
\end{align}
for any $q>0$,.

For any $q>0$, we also have
\begin{align}
{V}_\alpha(\boldsymbol{s}')|_ 0 &=\delta'+1 + wd + \widebar{V}_\alpha(\boldsymbol{s}', 0)\\
    \label{apx:delta_thrd_31}
    &<\delta+1 + \widebar{V}_\alpha(\boldsymbol{s}, 0) \\
    \label{apx:delta_thrd_32}
    & \leq C(\boldsymbol{s}, q) + wd + \widebar{V}_\alpha(\boldsymbol{s}, q) \\
    \label{apx:delta_thrd_33}
    & = C(\boldsymbol{s}', q) + \widebar{V}_\alpha(\boldsymbol{s}', q) = {V}_\alpha(\boldsymbol{s}')|_q,
\end{align}
in which \eqref{apx:delta_thrd_31} follows $\delta'<\delta$ and $\widebar{V}_\alpha(\boldsymbol{s}', 0) \leq\widebar{V}_\alpha(\boldsymbol{s}, 0)$ (since both $\widebar{V}_\alpha(\boldsymbol{s}, 0)$ and ${V}_\alpha(\boldsymbol{s})$ are non-decreasing with $\delta$);
    \eqref{apx:delta_thrd_32} follows \eqref{apx:delta_thrd_30} and \eqref{apx:delta_thrd_300};
 \eqref{apx:delta_thrd_33} follows $C(\boldsymbol{s}, q)=1+wD(q)=C(\boldsymbol{s}', q)$ and $\widebar{V}_\alpha(\boldsymbol{s}, q)=\widebar{V}_\alpha(\boldsymbol{s}', q)$ (since the destination state after action $P_k=q$ is independent from the AoI).

That is, $p=0$ is optimal for $\boldsymbol{s}'$.

\subsubsection{The case $p>0$, $\delta'=\delta$, and $d'>d$}
By using a positive transmit power $p>0$, the distortion of the next block is $D(p)$, which is independent from the distortion $d$ of the starting state.
    Thus, by start from either $\boldsymbol{s}$ or $\boldsymbol{s}'$, the system has the same potential state set for the next block, i.e., state $\boldsymbol{t}_1=(1,D(q), b-q+1)$ and  state $\boldsymbol{t}_0=(1,D(q), b-q)$.
    Hence, we have
\begin{align}\label{apx:d_thrd_1}
    {V}_\alpha(\boldsymbol{s})|_ p &= {V}_\alpha(\boldsymbol{s}')|_ p,\\
    \label{apx:d_thrd_2}
    {V}_\alpha(\boldsymbol{s})|_ q &= {V}_\alpha(\boldsymbol{s}')|_ q, ~\text{if}~q>0.
\end{align}

 For the case $q=0$, the system would transit from $\boldsymbol{s}$ to $\boldsymbol{t}_{01}=(\delta+1,d, b+1)$ or $\boldsymbol{t}_{00}=(\delta+1,d, b)$, and from $\boldsymbol{s}'$ to $\boldsymbol{t}'_{01}=(\delta+1,d', b+1)$ or $\boldsymbol{t}'_{00}=(\delta+1,d', b)$.
    Since ${V}_\alpha(\boldsymbol{s})$ is non-decreasing with $d$, we have $\widebar{V}_\alpha(\boldsymbol{s}, 0) \leq \widebar{V}_\alpha(\boldsymbol{s}', 0)$, and thus
\begin{align}\label{apx:d_thrd_3}
    {V}_\alpha(\boldsymbol{s})|_0 &= \delta+1+wd + \widebar{V}_\alpha(\boldsymbol{s}, 0) \\
    \label{apx:d_thrd_4}
            &\leq \delta+1+wd' + \widebar{V}_\alpha(\boldsymbol{s}', 0)
            = {V}_\alpha(\boldsymbol{s}')|_ 0.
\end{align}

By combing \eqref{apx:d_thrd_2} and \eqref{apx:d_thrd_3}--\eqref{apx:d_thrd_4}, we have
\begin{equation}\label{apx:d_thrd_5}
    {V}_\alpha(\boldsymbol{s})|_ q \leq {V}_\alpha(\boldsymbol{s}')|_ q, ~~\forall~ q\geq0.
\end{equation}

Moreover, $p>0$ is optimal for $\boldsymbol{s}$ implies that for any $q\neq p$,
\begin{align}\label{apx:d_thrd_6}
    V_\alpha(\boldsymbol{s}) = {V}_\alpha(\boldsymbol{s})|_ p\leq {V}_\alpha(\boldsymbol{s})|_ q.
\end{align}

By combing \eqref{apx:d_thrd_1}, \eqref{apx:d_thrd_5}, \eqref{apx:d_thrd_6}, we finally have
\begin{equation}\label{apx:d_thrd_8}
    {V}_\alpha(\boldsymbol{s}')|_ p = {V}_\alpha(\boldsymbol{s})|_ p
    \leq {V}_\alpha(\boldsymbol{s})|_ q  \leq {V}_\alpha(\boldsymbol{s}')|_ q,  ~\forall~ q\geq0.
\end{equation}

That is, $p$ is also optimal for state $\boldsymbol{s}'$.

\subsubsection{In the case $p=0$, $\delta'=\delta$, and $d'<d$}
As discussed in the previous sub-subsection, we have $\widebar{V}_\alpha(\boldsymbol{s}', 0) \leq \widebar{V}_\alpha(\boldsymbol{s}, 0)$ since $d'<d$.
    Also, we have  ${V}_\alpha(\boldsymbol{s}')|_ q = {V}_\alpha(\boldsymbol{s})|_q$ for any $q>0$.
Therefore, the following result holds.
\begin{equation}\label{apx:d_thrd_9}
    {V}_\alpha(\boldsymbol{s}')|_ 0 = {V}_\alpha(\boldsymbol{s})|_ 0
    \leq {V}_\alpha(\boldsymbol{s})|_ q  \leq {V}_\alpha(\boldsymbol{s}')|_ q,  \forall~ q>0
\end{equation}
which shows that $p=0$ is optimal for state $\boldsymbol{s}'$.
This completes the proof of \textit{Theorem} \ref{th:3_pproperty}.
\end{proof}


\small{
\bibliographystyle{IEEEtran}

\begin{thebibliography}{11}

\bibitem{Monitoring-IoTJ-2018}
F, Montori, L. Bedogni, and L. Bononi, ``A collaborative Internet of Things architecture for
smart cities and environmental monitoring," \textit{IEEE Internet Things J.}, vol. 5, no. 2, pp. 592--605, Apl. 2018.

\bibitem{industrial-auto}
L. Ascorti, S. Savazzi, G. Soatti, M. Nicoli, M. Sisinni, and S. Galimberti, ``A wireless cloud network platform for iIndustrial process automation: Critical data publishing and distributed sensing," \textit{IEEE Trans. Instrum. Meas.}, vol. 66, no. 4, pp. 592--603, Apr. 2017.

\bibitem{uav-tracking}
J. Gu, T. Su, Q. Wang, X. Du, and M. Guizani, ``Multiple moving targets surveillance based on a cooperative network for multi-UAV," \textit{IEEE Commun. Mag.}, vol. 56, no. 4, pp. 82--89, Apr. 2018.

\bibitem{dong-TWC-correlated_sensing}
Y. Dong, Z. Chen, J. Wang, and B. Shim, ``Optimal power control for transmitting correlated sources with energy harvesting constraints," \textit{IEEE Trans. Wireless Commun.,} vol.17, no. 1, pp. 461--476, Jan. 2018.

\bibitem{Dong-TSP-2019}
Y. Dong, ``Distributed sensing with orthogonal multiple access: To code or not to code?" sumitted to \textit{IEEE Trans. Signal Process.,} May, 2019.

\bibitem{Xiao-Linear-2008}
J.-J. Xiao, S. Cui, Z.-Q. Luo, and A. J. Goldsmith,   ``Linear coherent decentralized estimation," \textit{IEEE Trans. Signal Process.,} vol. 56, no. 2, pp. 757--770, Feb. 2008.

%

\bibitem{Vnet-1-2011}
S. Kaul, M. Gruteser, V. Rai, and J. Kenney, ``Minimizing age of information in vehicular networks," in \textit{Proc. IEEE SECON}, Salt Lake, Utah, USA, Jun. 2011, pp. 350--358.

\bibitem{Gu-2019-mornitoring}
Y. Gu, H. Chen, Y. Zhou, Y. Li,  and B. Vucetic, ``Timely status update in Internet of Things monitoring systems: An age-energy tradeoff," \textit{IEEE Internet Things J.}, vol. 6, no. 3, pp. 5324--5335, Jun. 2019.

\bibitem{Niu-2019-RR1}
Z. Jiang, B. Krishnamachari, X.  Zheng, S. Zhou, and Z. Niu,``Timely status update in wireless uplinks: Analytical solutions with asymptotic optimality," \textit{IEEE Internet Things J.}, vol. 6, no. 2, pp. 3885--3898, Apl. 2019.

\bibitem{health-Proc-2012}
I. Bisio, F. Lavagetto, M. Marchese, and A. Sciarrone, ``Smartphone-based user activity recognition method for health remote monitoring applications", in \textit{ Proc. Intl. Conf.  Pervasive  Embedded
Comput. Commun. Sys.}, Rome, Italy, Feb. 2012, pp. 200--205.

\bibitem{Gu-2019-cognitive}
Y. Gu, H. Chen, C. Zhai, Y.  Li, and B. Vucetic, ``Minimizing age of information in cognitive radio-based IoT systems: Underlay or overlay?" \textit{IEEE Internet Things J.}, vol. 99, no. 5, pp. 3885--3898, Oct. 2019.

\bibitem{Dong-2018-infocom}
Y. Dong, Z. Chen, and P. Fan, ``Uplink age of information of unilaterally powered two-way data exchanging systems,'' in \textit{Proc. IEEE Conf. Comput. Commun. Workshops (INFOCOM),} Honolulu, HI, USA, Apr. 2018, pp. 559--564.

\bibitem{Dong-2019-access}
Y. Dong, Z. Chen, and P.i Fan, ``Timely two-way data exchanging in unilaterally powered fog computing systems," \textit{IEEE Access}, vol. 7, pp. 21103--21117, Feb. 2019.

\bibitem{Dong-2019-jcn}
C. Hu and Y. Dong, ``Age of information of two-way data exchanging system with power-splitting," \textit{IEEE J. Commun. Netw.}, vol. 21, no. 3, pp. 295--306, Jun. 2019.

\bibitem{ulukus-JSAC-eh}
S. Ulukus et al., ``Energy harvesting wireless communications: A review of recent advances," \textit{IEEE J. Sel. Areas Commun.,} vol. 33, no. 3, pp. 360--381, Mar. 2015.

\bibitem{Uysal-2015-ita}
B. T. Bacinoglu, E. T. Ceran, and E. Uysal-Biyikoglu, ``Age of information under energy replenishment constraints," in \textit{Proc. IEEE Inf. Theory App. Wksp (ITA)}, San Diego, CA, USA, Feb. 2015, pp. 1--6.

\bibitem{Ulukus-TWC-2019}
A. Arafa and S.Ulukus, ``Timely updates in energy harvesting two-hop networks: Offline and online policies," \textit{IEEE Trans. Wireless Commun.,} vol. 18, no. 8, pp. 4017--4030, Aug. 2019.

\bibitem{Uysal-2017-ISIT}
B. T. Bacinoglu and E. Uysal-Biyikoglu, ``Scheduling status updates to minimize age of information with an energy harvesting sensor," in \textit{Proc. IEEE Int. Symp. Inf. Theory (ISIT)}, Aachen, Germany, Jun. 2017, pp. 1--6.

\bibitem{Yangjing-2017-TGCN}
X. Wu, J. Yang, and J. Wu, ``Optimal status update for age of information minimization with an energy harvesting source," \textit{IEEE Trans. Green Commun. and Netw.}, vol. 2, no. 1, pp. 193--204, Mar. 2018.

\bibitem{Zhou-TVT-2016}
H. Zhou, T. Jiang, C. Gong, and Y. Zhou, ``Optimal estimation in wireless sensor networks with energy harvesting,"  \textit{IEEE Trans. Veh. Technol.}, vol. 65, no. 11, pp.  9386--9396, Nov. 2016.

\bibitem{Reily-ComST-2014}
C. Reilly, A. Gluhak, M. A. Imran, S. Rajasegarar, ``Anomaly detection in wireless sensor networks in a non-stationary environment,"  \textit{IEEE Commun. Surveys \& Tutorials}, vol.  16, no. 3,  pp. 1413--1432, Jan. 2014.

\bibitem{Joda-TCom-2013}
R. Joda and Farshad Lahouti, ``Delay-limited source and channel coding of quasi-stationary sources over block fading
channels: Design and scaling laws," \textit{IEEE Trans. Commun.,} vol. 61, no. 4, pp. 1562--1572, Apl. 2013.

\bibitem{Cover-IT-Book}
T. M. Cover and J. A. Thomas, \textit{Elements of Information Theory,} Wiley,
New York,  2ed edition, 2012.

\bibitem{Yates-2012-age}
S. K. Kaul, R. D. Yates, and M. Gruteser, ``Real-time status: How often should one update?" in \textit{Proc. IEEE INFOCOM}, Orlando, FL, USA, Mar. 2012, pp. 2731--2735.

\bibitem{Ulukus-2012-awgn}
O. Ozel and S. Ulukus, ``Achieving AWGN capacity under stochastic energy harvesting," \textit{IEEE Trans.  Inform. Theory}, vol. 58, no. 10, pp. 6471--6483, Oct. 2012.

\bibitem{JYang-2012-jcn-backward}
J. Yang and S. Ulukus, ``Optimal packet scheduling in a multiple access channel with energy harvesting transmitters," \textit{J. Commun. Netw.}, vol. 14, no. 2, pp. 140--150, Apr. 2012.

\bibitem{Li-integeropt-2006}
D. Li and X. Sun, \textit{Nonlinear integer programming,} Springer,
New York,   2006.

\bibitem{Ross-MDP-1970}
S.M. Ross, \textit{Applied probability models with optimization applications,} CA: Holden-Day,
San Francisco, 1970.

\end{thebibliography}

}

\end{document}